\newtheorem{thm}{Theorem}[section]
\newtheorem{lem}[thm]{Lemma}
\newtheorem{defn}[thm]{Definition}
\newtheorem{cor}[thm]{Corollary}
\newtheorem{prop}[thm]{Proposition}
\newtheorem{ex}[thm]{Example}
\newenvironment{proof}{\noindent\textsc{Proof: }}{\hfill$\fbox{}$\par\medskip\par}
\newenvironment{aenum}{\begin{enumerate}
 
 }{\end{enumerate}}
\newcommand{\F}{{\mathbb F}}
\newcommand{\R}{{\mathbb R}}
\newcommand{\Z}{{\mathbb Z}}
\newcommand{\N}{{\mathbb N}}
\newcommand{\cK}{{\cal K}}
\newcommand{\cL}{{\cal L}}
\newcommand{\cV}{{\cal V}}
\newcommand{\diam}{\mbox{\rm diam\,}}
\newcommand{\rank}{\mbox{\rm rank\,}}
\def\mapright#1{\stackrel{#1}{\longrightarrow}}
\def\mapdown#1{\Big\downarrow\rlap{$\vcenter{\hbox{$\scriptstyle#1$}}$}}
\newcommand{\aphi}{{\varphi^\urcorner}} 
\newcommand{\apsi}{{\psi^\urcorner}} 
\newcommand{\lphi}{{\overline{\varphi}}} 
\newcommand{\mphi}{{\tilde{\varphi}}} 
\newcommand{\mpsi}{{\tilde{\psi}}} 
\newcommand{\dm}{\mathrm{D}\,} 
\newcommand{\dmm}{\mathrm{D_m}\,} 
\title{Comparison of Persistent Homologies for Vector Functions: from continuous to discrete and back}
\author{Niccol\`o Cavazza, Marc Ethier, Patrizio Frosini,\\
Tomasz Kaczynski, and Claudia Landi}
\begin{document}

\maketitle

\begin{abstract}

The theory of multidimensional persistent homology was initially developed in the discrete setting, and involved the study of simplicial complexes filtered through an ordering of the simplices. Later, stability properties of multidimensional persistence have been proved to hold when topological spaces are filtered by continuous functions, i.e. for continuous data. This paper aims to provide a bridge between the continuous setting, where stability properties hold, and the discrete setting, where actual computations are carried out. More precisely, a stability preserving method is developed to compare rank invariants of vector functions obtained from discrete data. These advances confirm that multidimensional persistent homology is an appropriate tool for shape comparison in computer vision and computer graphics applications. The results are supported by numerical tests.
\end{abstract}

\noindent
{\bf Keywords:} Multidimensional persistent homology; axis-wise interpolation; filtration; matching distance; topological aliasing

\noindent
{\bf Mathematics Subject Classification (2010):} 55-04; 65D18

\section{Introduction}

In this paper we present a discrete counterpart of the theory of persistent homology of vector functions that still guarantees stability properties as the continuous framework. The theory of multidimensional persistence was developed in the discrete setting in \cite{CaZo07}, and involved the study of simplicial complexes filtered through an ordering of the simplices. On the other hand stability properties of multidimensional persistence are proved to hold when triangulable spaces are filtered by continuous functions, i.e. for continuous data \cite{FrMu99,CeDi*10}. This paper aims to be a bridge between the continuous setting, where stability properties hold, and the discrete setting, where actual computations are carried out.
More precisely, we develop a method to compare persistent homologies of vector functions obtained from discrete data. We show that in the passage from the continuous to the discrete framework stability is preserved. These advances support the appropriateness of multidimensional persistent homology for shape comparison by functions.

The problem of comparing shapes is well-studied in computer vision and computer graphics and many algorithms have been developed for this purpose. A widely used scheme is to associate a shape with a shape descriptor, or a signature, and comparing shapes by measuring dissimilarity between descriptors.
An important class of shape descriptors, which may be called {\em shape-from-functions} methods, is based on the common idea of performing a topological exploration of the shape according to some quantitative geometric properties provided
by a (measuring) function defined on the shape and chosen to extract shape features \cite{BiDe*08}.

The simplest topological attribute of a space is the number of its connected components. A well-known mathematical tool to count the number of connected components is the homology group $H_0$. More complex topological features are revealed by higher homology groups.

Persistent homology is a shape-from-functions method for shape description involving homology groups of any degree. The idea is to filter a space by the sublevel sets of the function and to analyze the homological changes of the sublevel sets across this filtration, due to the appearance or disappearance of topological attributes, such as connected components. Features with a short persistence along the filtration can be regarded as negligible information due to noise or very fine details. For application purposes, it is often sufficient to disregard the group structure of persistent homology and retain only the rank information. This gives rise to the notions of {\em rank invariant} \cite{CaZo07}, persistent Betti numbers \cite{EdLeZo02}, size functions \cite{VeUr*93}.

The topic has been widely studied in the case of filtrations induced by scalar continuous functions (i.e. one-dimensional persistence), especially in connection with the stability problem \cite{CoEdHa07,ChCo*09,CoEdHaMi10, dAFrLa}.

This theory has been generalized to a multidimensional situation in which a vector-valued function characterizes
the data as suggested in \cite{EdHa02,EdHa08}. Results in this area are given in \cite{BiCe*08,CaZo07, CaDiFe10,CeDi*10}. This generalization is quite natural in view of the analogous generalization of Morse Theory \cite{Sma73}. Moreover, it is motivated by applications where data are more completely described by more than one function (e.g., curvature and torsion for space curves).

The passage from scalar to vector-valued functions presents new challenges. To begin with, critical points are no longer isolated even in non-degenerate situations \cite{EdHa02}. Although the relevant points for persistent homology of vector functions are a subset of the critical points, precisely the Pareto critical points, these are still non-isolated \cite{CeFr10}. For example, in the case of the sphere $x^2+y^2+z^2=1$ with the function $f=(y,z)$, the Pareto critical points are those in the set $ x=0$, $y^2+z^2=1$, $yz\ge 0$.

Another delicate issue is passing from the comparison of continuous models to that of discrete models. This is an essential passage, and the core of this paper. Indeed, for two given real-world objects $X$ and $Y$, modeled as triangulable topological spaces (e.g., manifolds), we usually only know simplicial descriptions $\cK$ and $\cL$ of them, affected by approximation errors. For example, acquiring 3D models of real-world objects for computer graphics applications needs to account for errors due to sensor resolution, noise in the measurements, inaccuracy of sensor calibration \cite{BeRu02}. Moreover, different techniques for reconstructing the geometry and topology of the scanned object yield different polyhedral approximations. Analogous considerations hold for any continuous measuring functions $f:X\to \R^k$ and $g:Y\to \R^k$, because we could only consider approximations $\mphi:K\to \R^k$, $\mpsi:L\to\R^k$ of $f$ and $g$ defined on finite polyhedra. Depending on the context of a specific application, these functions may or may not be given by explicit formulas. In either case, it is legitimate to assume that we are able to compute their values on vertices of $\cK$ and $\cL$. Hence, we only know the discrete maps $\varphi:\cV(\cK)\to \R^k$, $\psi :\cV(\cL)\to \R^k$ which are the restrictions of $\mphi$ and $\mpsi$ to vertices. Therefore a natural question is whether shape comparison by persistent homology of vector functions is numerically stable, i.e. whether the computation of a distance between rank invariants of discrete models gives a good approximation of the ideal distance between rank invariants of continuous models.

Our main result, Theorem \ref{thm:main}, gives an affirmative answer to this question. It states that, in the passage from continuous to discrete data, the distance between rank invariants does not increase, provided that stability holds for the continuous model. We underline that at least one stable distance between rank invariants of continuous vector functions exists as proved in \cite{CeDi*10}.
In order to profit from the stability theory in the continuous case, we give a new construction of axis-wise linear interpolation $\aphi$ which is generic in the sense that its persistent homology is exactly equal to that of the map $\varphi$ defined on vertices. 
In addition, this axis-wise interpolation can be used with stable distances to obtain a measure of how much a model can be simplified in order to ease the computation of shape signatures. Indeed, the computation time can become prohibitive when using large simplicial complexes to represent models, which is why using coarser representations can become necessary. Since doing so comes at a cost in terms of accuracy, we can, given an allowed error threshold, determine the level of precision required to respect this threshold.

The paper is organized as follows. In Section \ref{sec:background} the necessary background notions concerning persistence are reviewed and put in the context of our aims. Section \ref{sec:simplicial} starts with the description of the simplicial framework and with Example~\ref{ex:tetrahedron} which is a simplicial analogue of the sphere example pointed above. The same example shows that, in the vector case, the linear extension of a map defined on vertices does not satisfy the genericity property described above. Topological artifacts of an interpolation method have been observed before. This phenomenon can be referred to as {\em topological aliasing}. Our example motivates the construction of our axis-wise linear interpolation. We next prove Theorem \ref{th:sublevel-cont-vs-simplicial} on the deformation retraction of continuous sublevel sets of $\aphi$ onto the simplicial sublevel sets of $\varphi$. We introduce the notion of homological critical value for vector functions. As in the sphere example, the set of critical values need not be discrete, but we prove in Theorem \ref{th:hom-crit-C} that in the case of $\aphi$ it has to be contained in a finite union of hyperplanes, thus it is a nowhere dense set and its $k$-dimensional Lebesgue measure is zero (Corollary \ref{cor:nowhere-dense}).

Section \ref{Sec:approximation} starts with Lemma \ref{lem:approx} that provides an approximation of a distance between the rank invariants of continuous functions by that of the rank invariants of the corresponding axis-wise linear approximations. The genericity of $\aphi$ allows us to introduce the rank invariant for $\varphi$. Although this rank invariant is defined for a discrete function $\varphi$ and computed using only simplicial sublevel sets, it takes pairs of real vectors as variables, as it is in the case of the rank invariant for continuous functions. We show that this new rank invariant for $\varphi$ is equal to that of $\aphi$. This allows us to derive the main result of the paper (Theorem~\ref{thm:main}).

Section \ref{sec:experimentation} describes an algorithm which computes an approximate matching distance. Our algorithm is a modification of the algorithm described in \cite{BiCe*10}, adapted to the rank invariants. The correctness of the algorithm is guaranteed by the results of Section \ref{Sec:approximation}. We next present tests of the algorithm performed on simplicial models in the case $k=2$. Our tests revealed the same discrepancy as observed in Example \ref{ex:tetrahedron}, thus providing numerical confirmation of topological aliasing. 
Finally, as a practical implication of our theoretical results, we present a procedure to predetermine to which extent data resolution can be coarsened in order to maintain a certain error threshold on rank invariants.

\section{Basic notions and working assumptions}
\label{sec:background}

Let us consider a triangulable topological space $X$ (i.e., a space homeomorphic to the carrier of a finite simplicial complex). A {\em filtration} of $X$ is a family $\mathcal{F}=\{X_\alpha\}_{\alpha\in \R^k}$ of subsets of $X$ that are nested with respect to inclusions, that is: $X_\alpha\subseteq X_{\beta}$, for every $\alpha\preceq \beta$, where $\alpha\preceq \beta$ if and only if $\alpha_j \leq \beta_j$ for all $j=1,2,\ldots, k$.

Persistence is based on analyzing the homological changes occurring along the filtration as $\alpha$ varies. This analysis is carried out by considering, for $\alpha\preceq\beta$, the homomorphism
\[
H_*(i^{(\alpha,\beta)}): H_*(X_{\alpha}) \to
H_*(X_{\beta}).
\]
induced by the inclusion map $i^{(\alpha,\beta)}:X_{\alpha}\hookrightarrow X_{\beta}$.
We work with \v{C}ech homology with coefficients in a given field $\F$. When each $X_\alpha$, $\alpha\in\R^k$, is triangulable, it reduces to simplicial homology.
For simplicity of notation we write $H_*(X_\alpha)$ for the graded homology space $H_*(X_\alpha;\F)=\{H_q(X_\alpha;\F)\}_{q\in\Z}$. The choice of a field is only made in experimentations, the most convenient in computations being $\F=\Z_p$, with $p$ a prime number. Thus, for any $q\in\Z$, $H_q(X_\alpha)$ is a vector space of dimension equal to the $q$'th Betti number of $X_\alpha$.

The image of the map $H_q(i^{(\alpha,\beta)})$ is a vector space known as the {\em $q$'th persistent homology group} of the filtration at $(\alpha,\beta)$. It contains the homology classes of order $q$ born not later than $\alpha$ and still alive at $\beta$. The dimension of this vector space is called a {\em $q$'th persistent Betti number}.

A rank invariant is a function that encodes the changes in the persistent Betti numbers as $\alpha$ and $\beta$ vary.
Setting
\[
\Delta^k_+:=\{(\alpha,\beta)\in \R^k\times \R^k \mid \alpha \prec \beta\},
\]
where $\alpha\prec \beta$ if and only if $\alpha_j < \beta_j$ for all $j=1,2,\ldots, k$,
the {\em $q$'th rank invariant} of the filtration $\mathcal F$ is the function $\rho_{\mathcal F}^q:\Delta^k_+ \to \N\cup\{\infty\}$ defined on each pair $(\alpha,\beta)\in \Delta^k_+$ as the rank of the map $H_q(i^{(\alpha,\beta)})$. In other words, $\rho_{\mathcal F}^q(\alpha,\beta)=\dim \mathrm{im} H_q(i^{(\alpha,\beta)})$.

In this paper, we will use the notation $\rho_{\mathcal F}$ to refer to rank invariants of arbitrary order. Ultimately, the shapes of two triangulable spaces $X$ and $Y$, filtered by $\mathcal F$ and $\mathcal G$, respectively, can be compared by using an (extended) distance $\dm$ between their rank invariants
$\rho_{\mathcal F}$ and $\rho_{\mathcal G}$.

The framework described so far for general filtrations can be specialized in various directions. We now review the two most relevant ones for our paper.

\subsection{Persistence of sublevel set filtrations}
\label{sec:stability-property}

Given a continuous function $f:X\to \R^k$, it induces on $X$ the so-called {\em sublevel set filtration}, defined as follows:
\[
X_\alpha=  \{x\in X \mid f(x)\preceq \alpha\}.
\]

We will call the function $f$ a {\em measuring function} and denote the rank invariant associated with this filtration by $\rho_f$.

Since $X$ is assumed to be triangulable and $f$ is continuous, $\rho_f(\alpha,\beta)<+\infty$ for every $\alpha\prec \beta\in\R^k$ (cf. \cite{CaLa11}).

Among all the (extended) distances $\dm$ between rank invariants of filtrations, we confine our study to those ones that, when applied to sublevel set filtrations, satisfy the following {\em stability property}:

\begin{enumerate}
\item[(S)] For every $f,f':X\to \R^k$ continuous functions, $\dm(\rho_f,\rho_{f'})\le \|f-f'\|_\infty$
where $\|f\|_\infty = \max_{x\in X} \max_{i=1,\ldots,k}|f_i(x)|$.
\end{enumerate}

In \cite{CeDi*10} it has been shown that there is at least one distance between rank invariants, the {\em matching distance}, that has the stability property (S). An analogous stability property for a distance defined between modules is presented in \cite{Les11}.

The matching distance will be used for computations in the experiments described in Section \ref{sec:experimentation}. Until then, we will not need to specify which distance $\dm$ we are using, provided it satisfies (S).

\subsection{Persistence of simplicial complex filtrations}

We consider a simplicial complex $\cK$ consisting of closed
geometric simplices and its {\em carrier} defined by
\begin{equation} \label{eq:carrier}
K=|\cK|:=\bigcup \cK.
\end{equation}
The set of all vertices of $\cK$ is denoted by $\cV(\cK)$ or by $\cV$, if $\cK$ is clear from the context.
For $\sigma, \tau \in \cK$, the relation $\tau$ {\em is a face of} $\sigma$ is
denoted by $\tau \leq \sigma$. For proper faces, we write $\tau <
\sigma$.

In this discrete setting, we take a family $\{\cK_\alpha\}_{\alpha\in\R^k}$ of simplicial subcomplexes of $\cK$, such that $\cK_\alpha$ is a subcomplex of $\cK_\beta$, for every $\alpha\preceq \beta$. As a consequence, their carriers are nested with respect to inclusions, that is: $K_\alpha\subseteq K_{\beta}$, yielding a filtration of $K$.

In the next section we address the following problem: is any simplicial complex filtration induced by a suitable continuous function?

A positive answer to this question will allow us later to transfer the stability property of $\dm$ from the continuous to the discrete setting.

\section{From continuous to discrete vector functions}
\label{sec:simplicial}

We let $\varphi:\cV(\cK)\to \R^k$ be a vector-valued function defined on vertices. We suppose that $\varphi$ is a {\em discretization} of some continuous function $\tilde{\varphi}:K\to \R^k$. Reciprocally, $\tilde{\varphi}$ is an {\em interpolation} of $\varphi$. In this section we will simply assume that $\tilde{\varphi}$ is equal to $\varphi$ on vertices of $\cK$ but, of course, when it comes to computing, one has to set bounds for the rounding error. Although in some practical applications of persistent homology to the analysis of discrete multidimensional data $\tilde{\varphi}:K\to \R^k$ may be explicitly known, in some other cases we do not even have an explicit formula for $\tilde{\varphi}$: we assume that such a function exists, that we can estimate its modulus of uniform continuity (for the sake of simplicity, say, its Lipschitz constant), and that we can compute the values of $\tilde{\varphi}$ at grid points of arbitrary fine finite grids.

In a discrete model, we are interested in simplicial sublevel complexes
\[
\cK_{\alpha} := \{\sigma \in \cK \mid \varphi(v)\preceq \alpha
\mbox { for all vertices } v \leq \sigma\} .
\]

In Section \ref{sec:experimentation}, we compute the rank invariants for the discrete vector-valued function $\varphi$ and we use this information for computing the distance between rank invariants for their continuous interpolations. In order to do this, we need to know that there exists a continuous function which is a {\em generic interpolation} of $\varphi$, in the sense that its rank invariant is exactly equal to that of $\varphi$.

In the case $k=1$, that is, when $\varphi$ has values in $\R$, it can be shown that such an interpolation can be obtained by extending $\varphi$ to each simplex $\sigma\in\cK$ by linearity. We shall denote this interpolation by $\lphi$.
In that case, one can show that $K_{\alpha}$ is a deformation retract of $K_{\lphi\leq\alpha}$, so the inclusion of one set into another induces an isomorphism in homology.
This result belongs to ``mathematical folklore'': it is often implicitly used in computations without being proved. The arguments for that case are outlined in \cite[Section 2.5]{Mo08} and Theorem~\ref{th:sublevel-cont-vs-simplicial} we prove in this section contains this result as a special case. Unfortunately, if $k>1$, this result is no longer true as the following example shows:
\begin{ex}\label{ex:tetrahedron}  {\em
Let $K$ be the boundary of the tetrahedron shown in Figure~\ref{fig:tetrahedron}, homeomorphic to the 2D sphere.
\begin{figure}
\begin{center}
\psfrag{p}{$\lphi$}\psfrag{v0}{$v_0$}\psfrag{v1}{$v_1$}\psfrag{v2}{$v_2$}\psfrag{v3}{$v_3$}
\psfrag{p1}{$$}\psfrag{p0}{$$}\psfrag{p3}{$$}\psfrag{a}{$\alpha$}
\includegraphics[width=0.80\textwidth]{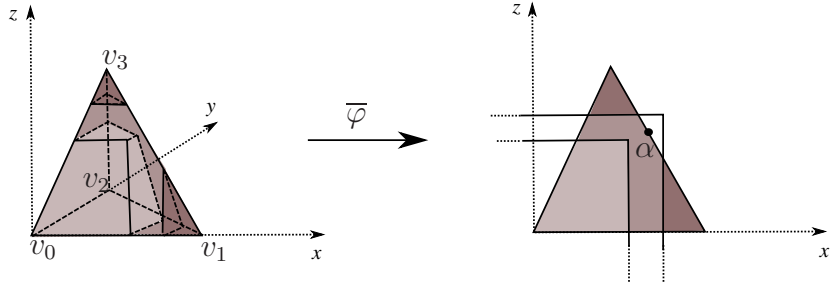}
\end{center}
\caption{The tetrahedron boundary and sketches of two sublevel sets of the linear interpolation $\lphi$ discussed in Example~\ref{ex:tetrahedron}. The values taken at the displayed edge are critical.}\label{fig:tetrahedron}
\end{figure}
The corresponding simplicial complex $\cK$ is made of all proper faces of the 3D simplex $[v_0,v_1,v_2,v_3]$ in $\R^3$, with vertices $v_0=(0,0,0)$, $v_1=(1,0,0)$, $v_2=(0,1,0)$, $v_3=(1/2,0,1)$. Hence $K=|\cK|$ is homeomorphic to a 2D sphere. Let $\varphi:K\to \R^2$ be the restriction of the linear function $\lphi$ given by $\lphi(x,y,z)=(x,z)$
to the four vertices. Let $\alpha\in \R^2$ be any value chosen so that $1/2<\alpha_1<1$ and $\alpha_2=2-2\alpha_1$.
It is easy to see that $K_\alpha=[v_0,v_2]$. Its homology is trivial.
Note that the set
$K_{\lphi\preceq \alpha}$ contains one point $x$ on the edge $[v_1,v_3]$, namely $x=(\alpha_1,0,\alpha_2)$, which closes a non-contractible path in $K_{\lphi\preceq \alpha}$. We have $H_1(K_{\lphi\preceq \alpha})\cong \F\neq 0$.

The discrepancy between the discrete and linear interpolated models seen in Example \ref{ex:tetrahedron} has been observed in applications to computer graphics and imaging, and has been recently referred to as topological aliasing.

Several interesting conclusions can be derived from this. First, $K_\alpha$ is not a deformation retract of $K_{\lphi\preceq \alpha}$. This remains true if we slightly increase the value of $\alpha$.
Secondly, if we slightly decrease $\alpha$, the set $K_\alpha$ does not change but the set $K_{\lphi\preceq \alpha}$ becomes contractible.
Hence, in the sense of Definition~\ref{def:hom-critical-value} presented further in this section, any value assumed at a point of the edge $[v_1,v_3]$ is a homological critical value. In particular, the set of such values may be uncountable. This is in contrast with the one-dimensional case, where a piecewise-linear function on a simplicial complex must have a discrete set of critical values.
 }
\end{ex}

We shall now construct a continuous function $\aphi: K\to \R^k$ called {\em axis-wise linear interpolation} of $\varphi$ which will correct the problem encountered with the linear interpolation $\lphi$ in the multidimensional case. First, given any $\sigma\in \cK$, let $\mu(\sigma) \in \R^k$ be defined by
\begin{equation} \label{eq:max-vertices}
\mu_j(\sigma) =\max\{\varphi_j(v) \mid v \mbox{ is a vertex of } \sigma \}, j=1,2,\ldots, k.
\end{equation}
Note that if $\tau\leq \sigma$, then $\mu(\tau)\preceq \mu(\sigma)$.

We will use induction on the dimension $m$ of $\sigma$ to define $\aphi: K\to \R^k$ on $\sigma$ and a point $w_\sigma\in \sigma$ with the following properties:
\begin{aenum}
\item For all $x\in \sigma$, $\aphi(x) \preceq \aphi(w_\sigma)=\mu(\sigma)$ ;
\item $\aphi$ is linear on any line segment $[w_\sigma,y]$ with $y$ on the boundary of $\sigma$.
\end{aenum}
If $m=0$, so that $\sigma=\{ v \}$ is a vertex, $\aphi(v)=\varphi(v)$ and we put $w_{\{ v \}}=v$. Let $ m>0$ and suppose $\aphi$ is constructed on simplices of lower dimensions. 
Let $\tau$ be a minimal face of $\sigma$ such that $\mu(\tau)=\mu(\sigma)$.
Consider two cases.
\begin{enumerate}
\item[(i)] 
If $\tau\ne \sigma$, then $w_\tau$ and $\aphi(w_\tau)$ are defined by the induction step. We put $w_\sigma=w_\tau$. Since $\sigma$ is convex, any $x$ in the interior of $\sigma$ is on a line segment joining $w_\sigma$ to a uniquely defined $y(x)$ on the boundary of $\sigma$. Since $\aphi(y(x))$ is defined by the induction step, we extend $\aphi$ to $[w_\sigma,y(x)]$ by linearity.

\item[(ii)] 
If $\tau=\sigma$, then let $w_\sigma$ be the barycenter of $\sigma$ and put $\aphi(w_\sigma)=\mu(\sigma)$. Again, any $x\neq w_\sigma$ in the interior of $\sigma$ is on a line segment joining $w_\sigma$ to a uniquely defined $y(x)$ on the boundary of $\sigma$ and we proceed as before.
\end{enumerate}
The property (a) follows from the fact that $\mu(\tau)\preceq \mu(\sigma)$ when $\tau\leq \sigma$, and from the linearity on joining segments. The property (b) is clear from the construction. By routine arguments from convex analysis, the point $y(x)$ on the boundary of $\sigma$ is a continuous function of $x\in \sigma \setminus \{w_\sigma$\}, and the constructed function is continuous on $\sigma$. Since we proceeded by induction on the dimension of $\sigma$, the definitions on any two simplices coincide on their common face, so $\aphi$ extends continuously to $K$. The property (b) implies that if $k=1$, and in certain cases of vector valued functions, $\aphi$ is equal to  $\lphi$, namely:

\begin{aenum}
\item[(c)] $\aphi$ is piecewise linear on each simplex $\sigma$. In addition, if $w_\tau$ is a vertex of $\tau$ for each $\tau\leq \sigma$, then it is linear on $\sigma$.
\end{aenum}

The difference between the piecewise linear and the axis-wise linear interpolations $\lphi$ and $\aphi$ is illustrated in Figure \ref{fig:axiswise} for a 1-simplex $\sigma=[v_0,v_1]$ and a function $\varphi$ defined on vertices.

\begin{figure}
\begin{center}
\psfrag{v0}{$v_0$}\psfrag{v1}{$v_1$}\psfrag{v2}{$v_2$}\psfrag{w}{$w_\sigma$}\psfrag{lpw}{$\lphi(w_\sigma)$}\psfrag{apw}{$\aphi(w_\sigma)$}\psfrag{p}{$\varphi$}\psfrag{a}{$\varphi_1(v_1)$}\psfrag{b}{$\varphi_1(v_0)$}\psfrag{c}{$\varphi_2(v_0)$}\psfrag{d}{$\varphi_2(v_1)$}\psfrag{p1}{$\varphi_1$}\psfrag{p2}{$\varphi_2$}
\includegraphics[width=0.60\textwidth]{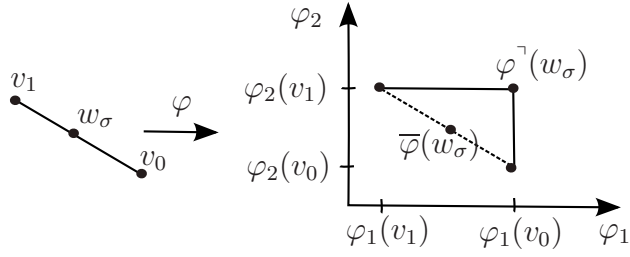}
\end{center}
\caption{The linear (dashed line) and axis-wise linear (continuous line) interpolations of a function $\varphi$ defined on the vertices of the simplex $\sigma=[v_0,v_1]$ with values in $\R^2$.   }\label{fig:axiswise}
\end{figure}

\begin{lem} \label{lem:corner}
The following statements hold:
\begin{enumerate}
\item[(i)] For any $\alpha\in\R^k$, $K_\alpha \subset K_{\aphi \preceq \alpha}$.
\item[(ii)] Let $\sigma \in \cK$ and $\alpha\in\R^k$. If $\sigma \cap K_{\aphi \preceq \alpha}\neq \emptyset$, then $\sigma$ has at least one vertex in $K_\alpha$.
\end{enumerate}
\end{lem}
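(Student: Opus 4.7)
For part (i), I would simply unfold the definitions. If $x\in K_\alpha$, then $x$ lies in some $\sigma\in\cK_\alpha$, meaning $\varphi(v)\preceq\alpha$ for every vertex $v\leq\sigma$; by the definition of $\mu$ this is precisely $\mu(\sigma)\preceq\alpha$, and property (a) of $\aphi$ immediately gives $\aphi(x)\preceq\aphi(w_\sigma)=\mu(\sigma)\preceq\alpha$, so $x\in K_{\aphi\preceq\alpha}$.

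For part (ii), the plan is induction on $m=\dim\sigma$, with the base case $m=0$ being trivial because $\aphi(v)=\varphi(v)$ on vertices. For the inductive step I would fix $x\in\sigma\cap K_{\aphi\preceq\alpha}$ and split into two cases according to the construction of $\aphi$. If $x=w_\sigma$, then $\mu(\sigma)=\aphi(w_\sigma)=\aphi(x)\preceq\alpha$, which already forces $\varphi(v)\preceq\mu(\sigma)\preceq\alpha$ for \emph{every} vertex $v$ of $\sigma$ by the definition of $\mu(\sigma)$. Otherwise, property (b) places $x$ on a segment $[w_\sigma,y]$ with $y\in\partial\sigma$ and some $t\in(0,1]$, and linearity yields
\[
\aphi(x)\;=\;(1-t)\,\mu(\sigma)\,+\,t\,\aphi(y).
\]
Applying property (a) to the point $y\in\sigma$ gives $\aphi_j(y)\leq\mu_j(\sigma)$ for each coordinate $j$, from which a one-line componentwise manipulation shows $\aphi_j(y)\leq\aphi_j(x)\leq\alpha_j$ for every $j$, i.e. $\aphi(y)\preceq\alpha$. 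Since $y$ belongs to a proper face $\tau<\sigma$, the induction hypothesis applied to $\tau$ supplies a vertex $v\leq\tau$ of $\sigma$ with $\varphi(v)\preceq\alpha$, as desired.

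The only step that carries real content is the coordinatewise monotonicity $\aphi_j(y)\leq\aphi_j(x)$ along the ray emanating from $w_\sigma$: this is precisely the feature that the axis-wise construction is designed to guarantee, and it is exactly where a naive simplex-wise linear interpolation $\lphi$ fails in the multidimensional case (compare Example~\ref{ex:tetrahedron}). Once this inequality is in hand, the reduction to a proper face and the recursive call are routine, so I do not anticipate any further obstacle.
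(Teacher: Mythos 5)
Your proposal is correct and follows essentially the same route as the paper's proof: part (i) by unfolding $\mu(\sigma)\preceq\alpha$ and invoking property (a), and part (ii) by induction on $\dim\sigma$, splitting on $x=w_\sigma$ versus $x$ on a segment $[w_\sigma,y(x)]$ and using linearity plus $\aphi(y(x))\preceq\aphi(w_\sigma)$ to get $\aphi(y(x))\preceq\aphi(x)\preceq\alpha$ before recursing on the proper face containing $y(x)$. The componentwise monotonicity you highlight is exactly the step the paper uses, just stated there more tersely.
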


\begin{proof}
{\em (i)} Let $\sigma \in \cK_\alpha$. It is clear from (\ref{eq:max-vertices}) that $\mu(\sigma)\preceq \alpha$. It follows from the property (a) in the definition of $\aphi$ that $\sigma\subset K_{\aphi \preceq \alpha}$.\\~\\
{\em (ii)} We follow the induction steps in the construction of $\aphi$. If $\dim(\sigma)=0$, $\sigma$ is a vertex and there is nothing to prove. Let $\dim(\sigma)=m>0$ and suppose the statement is proved for lower dimensions. Let $x\in \sigma \cap K_{\aphi \preceq \alpha}$. If $x=w_\sigma$, then $w_\sigma\in K_{\aphi \preceq \alpha}$. By the property (a) of $\aphi$, all $\sigma$ is in $K_\alpha$. If $x\neq w_\sigma$, then $x$ is on a line segment joining the point $w_\sigma$ of $\sigma$ with a point $y(x)$ of an $(m-1)$--simplex $\tau < \sigma$, where $\aphi(y(x))$ is defined by the induction hypothesis. We know that $\aphi$ is extended linearly to the line segment $[w_\sigma,y(x)]$. Also, $\aphi(y(x)) \preceq \aphi(w_\sigma)$ by the property (a). Hence $\aphi(y(x))\preceq \aphi(x) \preceq \alpha$. It remains to use the induction hypothesis for $y(x)$ and $\tau$ to deduce that $\tau$ has a vertex in $K_\alpha$.
\end{proof}

\begin{thm} \label{th:sublevel-cont-vs-simplicial}
For any $\alpha \in \R^k$, $K_\alpha$ is a strong deformation
retract of $K_{\aphi \preceq \alpha}$. Consequently, the inclusion
$K_\alpha \hookrightarrow K_{\aphi \preceq \alpha}$ induces an
isomorphism in homology.
\end{thm}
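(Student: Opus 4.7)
My plan is to construct an explicit strong deformation retraction of $K_{\aphi\preceq\alpha}$ onto $K_\alpha$ by a skeleton-by-skeleton induction, exploiting the star-like structure of $\aphi$ around the centers $w_\sigma$ together with Lemma~\ref{lem:corner}.

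The starting observation is a clean dichotomy on each simplex $\sigma\in\cK$. If $\mu(\sigma)\preceq\alpha$, then every vertex of $\sigma$ satisfies $\varphi(v)\preceq\alpha$, so $\sigma\subset K_\alpha\subset K_{\aphi\preceq\alpha}$, and $\sigma$ should remain fixed under the homotopy. Otherwise $\mu(\sigma)\not\preceq\alpha$, and since $\aphi(w_\sigma)=\mu(\sigma)$, the center $w_\sigma$ lies outside $K_{\aphi\preceq\alpha}$. In this case I would use a radial retraction: for $x\in\sigma\setminus\{w_\sigma\}$, let $y(x)\in\bdy\sigma$ be the unique point where the ray from $w_\sigma$ through $x$ meets $\bdy\sigma$, and push $x$ along the segment $[x,y(x)]$. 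Property (b) makes $\aphi$ linear on $[w_\sigma,y(x)]$, while property (a) applied to a face of $\sigma$ containing $y(x)$ gives $\aphi(y(x))\preceq\mu(\sigma)=\aphi(w_\sigma)$. Hence $\aphi$ is componentwise monotone non-increasing along this segment, so $\aphi(x)\preceq\alpha$ implies the entire subsegment $[x,y(x)]\subset K_{\aphi\preceq\alpha}$. The map $(x,t)\mapsto(1-t)x+ty(x)$ is therefore a deformation retraction of $\sigma\cap K_{\aphi\preceq\alpha}$ onto $\bdy\sigma\cap K_{\aphi\preceq\alpha}$ that fixes $\bdy\sigma$ pointwise and stays inside $K_{\aphi\preceq\alpha}$ throughout.

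I would assemble the global retraction by performing these local radial retractions simultaneously, skeleton by skeleton from dimension $m=\dim\cK$ down to $m=0$, concatenated over disjoint subintervals of $[0,1]$. The boundary fixity of each local retraction guarantees continuity across shared faces, so stages glue continuously. Moreover, the simplices satisfying $\mu(\sigma)\preceq\alpha$ form a subcomplex (since $\tau\leq\sigma$ implies $\mu(\tau)\preceq\mu(\sigma)$), so the region that must remain fixed is closed and coincides with $K_\alpha$; this makes the descending induction coherent. Lemma~\ref{lem:corner}(ii) ensures that whenever $\sigma\cap K_{\aphi\preceq\alpha}\neq\emptyset$, the boundary $\bdy\sigma$ contains at least one vertex in $K_\alpha$, so the target of each local retraction is non-empty and the induction terminates with the image in $K_\alpha$. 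The asserted homology isomorphism then follows from standard homotopy invariance.

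The main obstacle I expect is not conceptual but careful bookkeeping: checking continuity of $x\mapsto y(x)$ on $\sigma\cap K_{\aphi\preceq\alpha}$ (which hinges on the key fact that $w_\sigma$ is strictly outside this set in the non-trivial case), verifying that the concatenated stage homotopies agree at their transition times, and confirming that retractions performed at higher stages leave all intermediate points inside $K_{\aphi\preceq\alpha}$ so that the next stage has well-defined input. All of these reduce to the same linearity-plus-monotonicity argument above, but a rigorous write-up must track them explicitly.
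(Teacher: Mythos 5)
Your proposal is correct and follows essentially the same route as the paper: the same dichotomy via $\mu(\sigma)\preceq\alpha$, the same radial push from $w_\sigma$ to $y(x)$ using linearity of $\aphi$ on $[w_\sigma,y(x)]$ and the inequality $\aphi(y(x))\preceq\aphi(x)\preceq\aphi(w_\sigma)$, and the same appeal to Lemma~\ref{lem:corner}(ii) to guarantee the target $\sigma_\alpha$ is nonempty. Your ``descending skeleton-by-skeleton concatenation over disjoint subintervals'' is precisely what the paper's recursive definition produces with its dyadic time schedule ($H_\sigma$ equal to the identity on $[0,1/2^m]$, the radial push on $[1/2^m,1/2^{m-1}]$, then the boundary homotopy), and the continuity caveats you flag are exactly the ones the paper addresses.
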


\begin{proof}
Note that $K_{\aphi\preceq \alpha}$ is contained in a union of simplices $\sigma \in \cK$ such that
\begin{equation} \label{eq:sigma-sublevel}
\sigma_{\aphi \preceq \alpha}:=\sigma \cap K_{\aphi \preceq \alpha}\neq \emptyset.
\end{equation}
Given any such $\sigma$, consider the simplex $\sigma_\alpha$ defined as the convex hull of the set of vertices $v$ of $\sigma$ such that $\varphi(v)\preceq \alpha$. By the hypothesis on $\sigma$ and by Lemma~\ref{lem:corner}{\em (ii)}, $\sigma_\alpha \neq \emptyset$. 
Given any $\sigma\in\cK$ for which $\sigma_{\aphi \preceq \alpha}\neq\emptyset$,
we shall define a strong deformation retraction
\[
H_\sigma:\sigma_{\aphi \preceq \alpha} \times [0,1]\to
\sigma_{\aphi \preceq \alpha}
\]
with $r=H(\cdot,1)$ being a retraction of $\sigma_{\aphi\preceq \alpha}$ onto $\sigma_\alpha$.

The construction goes by induction on the dimension $m$ of $\sigma$ following the induction steps in the construction of the function $\aphi$. If $\dim(\sigma)=0$, $\sigma$ is a vertex and there is nothing to prove. 
Now let $m>0$. Suppose that the deformation retraction $H_\tau: \tau_{\aphi \preceq \alpha} \times [0,1]\to
\tau_{\aphi \preceq \alpha}$ is defined for simplices $\tau$ of dimension $m'<m$ with $\tau_{\aphi \preceq \alpha}\neq \emptyset$ in such a way that $H_\tau(x,t)=x$ for any $(x,t)\in \tau_{\aphi \preceq \alpha} \times [0,1/2^{m'}]$, and the values of $\aphi$ on $H_\tau(x,t)$ are decreasing with $t$. By ``decreasing'' we mean the weak inequality ``$\preceq$''. This hypothesis guarantees that the deformation has values in the set $\tau_{\aphi \preceq \alpha}$.

Let $x\in \sigma_{\aphi \preceq \alpha}$. If $x$ is on a boundary of $\sigma$, we define $H_\sigma(x,t)=H_\tau(x,t)$, where $\tau$ is the smallest face of $\sigma$ containing $x$ and $H_\tau$ is defined by the induction hypothesis. Suppose $x$ is in the interior of $\sigma$. Let $w_\sigma$ and $y(x)$ be the points identified in the definition of $\aphi$. Note that, if $\aphi(w_\sigma)\preceq \alpha$, then $\sigma_{\aphi \preceq \alpha}=\sigma$, hence the deformation must be defined as the identity map for each $t$:
\[
H_\sigma(x,t):=x \mbox{ for all } (x,t)\in \sigma \times [0,1].
\]
Therefore, we may suppose that $w_\sigma\notin \sigma_{\aphi \preceq \alpha}$. Consider the smallest face $\tau$ of $\sigma $ containing $y(x)$. Since $y(x)$ is on the boundary of $\sigma$, $\tau$ is a proper face of $\sigma$ of dimension, say, $m'<m$. By the construction of $\aphi$,
\begin{equation} \label{eq:inequalities}
\aphi(y(x))\preceq \aphi(x) \preceq \aphi(w_\sigma).
\end{equation}
Since $\aphi(x)\preceq \alpha$, we get $\aphi(y(x))\preceq \alpha$ so $y(x)\in \tau_{\aphi \preceq \alpha}\neq \emptyset$. By the induction hypothesis, a deformation retraction
\[
H_\tau:\tau_{\aphi \preceq \alpha} \times [0,1]\to
\tau_{\aphi \preceq \alpha} \subset \sigma_{\aphi \preceq \alpha}
\]
is defined so that the values of $\aphi$ on $H_\tau(x,t)$ decrease with $t$, and $H_\tau(x,t)=x$ for $t\in [0,1/2^{m'}]$. 

For any $t\in [0,1]$ and for $x$ in the interior of $\sigma$ we define 
\[
H_{\sigma}(x,t) := \left\{
\begin{array}{ll}
x  &  \text{ if }\; 0 \leq t < 1/2^m\\
(2^m t-1)y(x)-(2^m t-2)x   &  \text{ if }\; 1/2^m \leq t < 1/2^{m-1}\\
H_\tau(y(x),t) & \text{ if }\; 1/2^{m-1} \leq t \le 1
\end{array}
\right. .
\]

It is easily checked that $H_{{\sigma}}(x,1/2^m)=x$, $H_{{\sigma}}(x,1/2^{m-1})=y(x)$. Since $\aphi$ is linear on $[w_\sigma,y(x)]$, the inequality (\ref{eq:inequalities}) implies that the values of $\aphi$ on $H_{{\sigma}}(x,t)$ decrease with $t$. 

Thus we have defined $H_\sigma$ both when $x$ is on the boundary of $\sigma$ and when it is in the interior of $\sigma$. By construction, for every $x\in\sigma_{\aphi \preceq \alpha}$, $H_\sigma(x,0)=x$, and $H_{{\sigma}}(x,1)$ belongs to $\sigma_\alpha$, and moreover, for every $x\in\sigma_\alpha$, $H_{{\sigma}}(x,1)=x$. In order to conclude that $H_\sigma$ is a deformation retraction of $\sigma_{\aphi \preceq \alpha}$ onto $\sigma_\alpha$ we must prove that $H_{\sigma}$ is continuous. The continuity at a given point $(x_0,t_0)$ with $x_0$ in the interior of $\sigma$ follows from the continuity of $y(x)$ in $x$. The continuity at $(x_0,t_0)$ with $x_0$ on the boundary of $\sigma$ follows from the condition  that $H_\tau(x,t)=x$ for any $t\in [0,1/2^{m'}]$ and from the induction hypothesis. 

In order to continuously extend $H_\sigma$ to a deformation
\[
H:K_{\aphi \preceq \alpha} \times [0,1]\to K_{\aphi \preceq \alpha},
\]
it is enough to prove that, given two simplices $\sigma_1$ and $\sigma_2$ intersecting $K_{\aphi \preceq \alpha}$ and $\tau=\sigma_1 \cap \sigma_2$, the maps $H_{\sigma_1}$ and $H_{\sigma_2}$ agree at any $x\in \tau_{\aphi \preceq \alpha}$. It is clear from the definition that $H_{\sigma_1}(x,t)=H_{\sigma_2}(x,t)=H_{\tau}(x,t)$ for $x\in \tau$ and for all $t$, provided that $H_\tau$ is defined. But this is true, because $x\in\tau_{\aphi \preceq \alpha}$, so this is a nonempty set.
\end{proof}

In the next section, we use Theorem~\ref{th:sublevel-cont-vs-simplicial} to show that any distance between rank invariants of continuous functions that has property (S) can be approximated by the distance between rank invariants of discrete functions. We end this section with another application of Theorem~\ref{th:sublevel-cont-vs-simplicial} of interest in itself: a theorem on the structure of the set of critical values of the axis-wise interpolation $\aphi$. The following definition generalizes the notion of homological critical value given in \cite{CoEdHa07} to vector functions. In plain words we call homological critical any value $\alpha$ for which any sufficiently small neighborhood contains two values whose sublevel sets are included one into the other but cannot be retracted one onto the other. Neighborhoods are taken with respect to the norm $\|\alpha\|= \max_{j=1,2,\ldots, k}|\alpha_j|$ in $\R^k$.

\begin{defn} \label{def:hom-critical-value} {\em
Let $\mphi:K\to \R^k$ be a continuous vector function. A value $\alpha\in \R^k$ is a {\em homological critical value} of $\mphi$ if there exists an integer $q$ such that, for all sufficiently small real values $\epsilon>0$, two values $\alpha',\alpha''\in \R^k$ can be found with $\alpha'\preceq \alpha\preceq \alpha''$, $\|\alpha'-\alpha\|<\epsilon$, $\|\alpha''-\alpha\|<\epsilon$, such that the map
\[
H_q(K_{\mphi \preceq \alpha'})\rightarrow H_q(K_{\mphi \preceq \alpha''})
\]
induced by the inclusion $K_{\mphi \preceq \alpha'} \hookrightarrow K_{\mphi \preceq \alpha''}$ is not an isomorphism. If this condition fails, $\alpha$ is called a {\em homological regular value}.
 }
\end{defn}

Also note that, by the long exact sequence for the relative homology (see
e.g.\ \cite[Chapter 9]{KacMisMro04}) the critical value definition is equivalent to the
following condition on the graded relative homology
\[
H_*(K_{\mphi\preceq\alpha' },K_{\mphi\preceq\alpha''})\neq
0.
\]
For any $j=1,2,\ldots, k$ and a vertex $v\in \cV(\cK)$, consider the hyperplane of $\R^k$ given by the equation $\alpha_j=\aphi_j(v)$ and a positive closed cone $C_j(v)$ contained in it, given by the formula
\[
C_j(v):=\{\alpha\in\R^k\mid \alpha_j=\aphi_j(v) \mbox{ and } \alpha_i\geq \aphi_i(v) \mbox{ for all }i=1,2,\ldots, k\}.
\]
\begin{thm} \label{th:hom-crit-C}
The set of homological critical values of $\aphi$ is contained in the finite union of the described cones, namely, in the set
\[
C:=\bigcup\{C_j(v)\mid v\in\cV(\cK) \mbox{ and } j=1,2,\ldots, k\}.
\]
\end{thm}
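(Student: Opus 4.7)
The plan is to prove the contrapositive: if $\alpha\notin C$, then $\alpha$ is a homological regular value of $\aphi$. The key observation is that the condition $\alpha\notin C_j(v)$ for every vertex $v$ and every coordinate $j$ translates into a statement about vertex values. Indeed, $\alpha\in C_j(v)$ means $\varphi(v)\preceq \alpha$ with equality $\varphi_j(v)=\alpha_j$ in coordinate $j$. Consequently, $\alpha\notin C$ is equivalent to saying that for every vertex $v\in\cV(\cK)$, either $\varphi(v)\not\preceq\alpha$ (some coordinate of $\varphi(v)$ strictly exceeds the corresponding coordinate of $\alpha$), or $\varphi(v)\prec\alpha$ (strict inequality in every coordinate).

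The first main step is to exploit this characterization together with the finiteness of $\cV(\cK)$ to find a single $\epsilon>0$ such that, for every $\beta\in\R^k$ with $\|\beta-\alpha\|<\epsilon$, the set of vertices satisfying $\varphi(v)\preceq\beta$ coincides with the set satisfying $\varphi(v)\preceq\alpha$. For each vertex with $\varphi(v)\prec\alpha$, strict inequality persists in a whole neighborhood of $\alpha$; for each vertex with some $\varphi_i(v)>\alpha_i$, that strict inequality also persists in a neighborhood. Taking $\epsilon$ to be the minimum over the finitely many vertices of the resulting local radii does the job. Since $\cK_\beta$ depends only on which vertices satisfy $\varphi(v)\preceq\beta$, this yields $\cK_\beta=\cK_\alpha$, and in particular $K_\beta=K_\alpha$, for every $\beta$ in the $\epsilon$-ball around $\alpha$.

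Now fix any $\alpha'\preceq\alpha\preceq\alpha''$ with $\|\alpha'-\alpha\|<\epsilon$ and $\|\alpha''-\alpha\|<\epsilon$. By the previous step, $K_{\alpha'}=K_{\alpha''}$, so the inclusion $K_{\alpha'}\hookrightarrow K_{\alpha''}$ is the identity. Consider the commutative square of inclusions whose vertical arrows are $K_{\alpha'}\hookrightarrow K_{\aphi\preceq\alpha'}$ and $K_{\alpha''}\hookrightarrow K_{\aphi\preceq\alpha''}$, and whose horizontal arrows are the identity $K_{\alpha'}=K_{\alpha''}$ on top and the inclusion $K_{\aphi\preceq\alpha'}\hookrightarrow K_{\aphi\preceq\alpha''}$ on the bottom. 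By Theorem~\ref{th:sublevel-cont-vs-simplicial}, the two vertical inclusions induce isomorphisms in homology in every degree. A routine diagram chase then forces the bottom inclusion to induce isomorphisms $H_q(K_{\aphi\preceq\alpha'})\to H_q(K_{\aphi\preceq\alpha''})$ for every $q$. Hence $\alpha$ is a homological regular value of $\aphi$, completing the contrapositive.

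The only potentially delicate step is the second one, where we must ensure that a single $\epsilon$ works simultaneously for the relevant open conditions at all vertices; this is handled by the finiteness of $\cV(\cK)$. Everything else reduces either to unwinding the definition of the cones $C_j(v)$ or to applying Theorem~\ref{th:sublevel-cont-vs-simplicial} naturally across an inclusion, so no genuinely new geometric input beyond that theorem is needed.
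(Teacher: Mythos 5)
Your proof is correct and takes essentially the same route as the paper: both establish that $K_\beta=K_\alpha$ for all $\beta$ in a small ball around $\alpha\notin C$ and then transfer the resulting identity through the isomorphisms of Theorem~\ref{th:sublevel-cont-vs-simplicial} to conclude that $H_q(K_{\aphi\preceq\alpha'})\to H_q(K_{\aphi\preceq\alpha''})$ is an isomorphism. The only cosmetic difference is that you justify the local constancy of $\cK_\beta$ by a direct vertex-by-vertex argument using the finiteness of $\cV(\cK)$, while the paper invokes the closedness of $C$ and a segment-crossing argument; both are valid.
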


\begin{proof}
Consider any $\alpha\notin C$. We need to show that $\alpha$ is a homological regular value. Since $C$ is a closed set, an $\epsilon>0$ exists such that the set $\bar Q(\alpha,\epsilon)=\{\beta\in \R^k\mid \|\alpha-\beta\|\le \epsilon\}$ does not meet $C$.
If $\|\alpha-\beta\|\le\epsilon$, then
\begin{equation} \label{eq:alpha-pm-eps}
K_{\beta}=K_{\alpha}.
\end{equation}
Indeed, if this were not true, the segment joining $\alpha$ and $\beta$ should contain a point of $C$, against the choice of $\epsilon$.

Now, let us assume that $\alpha'\preceq\varphi(v)\preceq\alpha''$,
$\|\alpha-\alpha'\|\le\epsilon$ and $\|\alpha-\alpha''\|\le\epsilon$.
It follows from equation (\ref{eq:alpha-pm-eps}) and from Theorem~\ref{th:sublevel-cont-vs-simplicial} that the inclusions
$i':K_{\alpha}=K_{\alpha'} \hookrightarrow K_{\aphi \preceq \alpha'}$
and
$i'':K_{\alpha}=K_{\alpha''} \hookrightarrow K_{\aphi \preceq \alpha''}$
induce isomorphisms in homology. The inclusion
$i^{(\alpha',\alpha'')}:K_{\aphi \preceq \alpha'} \hookrightarrow K_{\aphi \preceq \alpha''}$
can be written as
$i^{(\alpha',\alpha'')}=i'' \circ r'$, where $r'$ is the retraction homotopically inverse to $i'$.
By the functoriality of homology,
$H_*(i^{(\alpha',\alpha'')})=H_*(i'') \circ H_*(r')$,
hence it is also an isomorphism.
\end{proof}

For the sake of visualization, in Figure \ref{fig:C} the set $C$ is shown in a simple case.

\begin{figure}
\begin{center}
\psfrag{p1}{$\varphi_1$}\psfrag{p2}{$\varphi_2$} \psfrag{p1a}{$\varphi_1(v_0)$}\psfrag{p1b}{$\varphi_1(v_1)$}\psfrag{p1c}{$\varphi_1(v_2)$}\psfrag{p1d}{$\varphi_1(v_3)$}
\psfrag{p2a}{$\varphi_2(v_0)$}\psfrag{p2b}{$\varphi_2(v_1)$}\psfrag{p2c}{$\varphi_2(v_2)$}\psfrag{p2d}{$\varphi_2(v_3)$}
\includegraphics[width=0.60\textwidth]{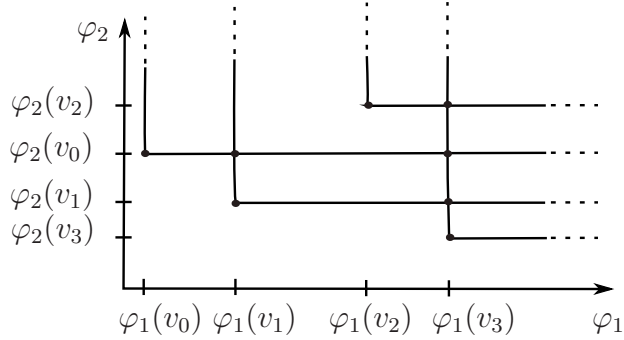}
\end{center}
\caption{The set $C$ defined in Theorem \ref{th:hom-crit-C} is the union of closed cones with vertices at the values taken by $\varphi$. The set $\Lambda$ introduced in Proposition \ref{prop:Lambda} is a finite set whose elements are the highlighted points.}
\end{figure} \label{fig:C}

From the formula for $C$, we instantly get an analogy of a well-known result from differential geometry \cite{Sma73}.

\begin{cor}\label{cor:nowhere-dense}
The set of homological critical values of $\aphi$ is a nowhere dense set in $\R^k$. Moreover its $k$-dimensional Lebesgue measure is zero.
\end{cor}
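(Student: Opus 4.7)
My plan is to deduce this corollary directly from Theorem \ref{th:hom-crit-C}, exploiting the fact that each cone $C_j(v)$ sits inside a coordinate hyperplane of $\R^k$, and that the collection of such cones is finite.

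First I would observe that because $\cK$ is a finite simplicial complex, the vertex set $\cV(\cK)$ is finite, and since $j$ ranges over $\{1,2,\ldots,k\}$, the set
\[
C=\bigcup\{C_j(v)\mid v\in\cV(\cK),\ j=1,2,\ldots,k\}
\]
is a \emph{finite} union of cones. Next I would note that each cone $C_j(v)$ is by its very definition contained in the affine hyperplane $\{\alpha\in\R^k\mid \alpha_j=\aphi_j(v)\}$, which is a proper affine subspace of $\R^k$; hence $C_j(v)$ has empty interior in $\R^k$ and $k$-dimensional Lebesgue measure zero. Moreover, $C_j(v)$ is closed in $\R^k$, being the intersection of a hyperplane with finitely many closed half-spaces of the form $\{\alpha_i\ge \aphi_i(v)\}$.

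Finally, I would combine these observations: a finite union of closed sets is closed, so $C$ is closed; a finite union of sets with empty interior that are closed is itself nowhere dense (for closed sets, nowhere dense is equivalent to having empty interior, and the latter is preserved under finite unions of closed sets by the Baire category argument, or simply because the complement of a finite intersection of open dense sets is still dense); and a finite union of Lebesgue null sets is Lebesgue null. Since by Theorem \ref{th:hom-crit-C} the set of homological critical values of $\aphi$ is a subset of $C$, both properties transfer to it, yielding the claim.

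The argument is essentially a bookkeeping consequence of Theorem \ref{th:hom-crit-C}; the only subtlety worth mentioning is ensuring that the containment in a hyperplane gives both emptiness of the interior and vanishing of the $k$-dimensional measure simultaneously, which is why I would present both conclusions in parallel rather than deriving one from the other.
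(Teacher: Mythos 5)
Your argument is correct and coincides with what the paper intends: the paper gives no explicit proof, asserting that the corollary follows ``instantly'' from the formula for $C$ in Theorem \ref{th:hom-crit-C}, i.e.\ precisely from the observation that $C$ is a finite union of closed cones, each lying in a hyperplane and hence nowhere dense and Lebesgue-null. Your write-up simply makes this explicit, so it matches the paper's approach.
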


\section{Approximation of distances between rank invariants}
\label{Sec:approximation}

The goal of this section is to show that shape comparison by persistent homology of vector functions is numerically stable. In this passage from real (continuous) objects to their discretizations, the approximation error does not grow to be much larger when we compute the distance $\dm$ between the rank invariants of $\varphi$ and $\psi$ instead of the distance $\dm$ between the rank invariants of $f$ and $g$ (Theorem \ref{thm:main}). To this end, the stability property (S) of $\dm$ defined in Section~\ref{sec:stability-property}
in the continuous setting is crucial.

A description of this approximation procedure in concrete examples together with experiments exploiting the numerical stability of the comparison by persistent homology will be given in Section \ref{sec:experimentation}.

We end the section by showing that the set of homological critical values, although uncountable, admits a finite representative set.\\

We start from the following approximation lemma. It may happen that $\dm(\rho_\mphi,\rho_\mpsi)$ and $\dm(\rho_\aphi,\rho_\apsi)$ are equal to $+\infty$. In the case of the matching distance, this occurs when $H_*(K)\ne H_*(L)$. In such a case, we adopt the convention $\infty-\infty =0$.

\begin{lem}\label{lem:approx}
Let $\mphi:K\to \R^k$, $\mpsi:L\to \R^k$ be two continuous measuring functions on the carrier of complexes $\cK$ and $\cL$.
For any $\epsilon>0$, there exists $\delta>0$ such that if
\begin{equation} \label{eq:simplex-delta}
\max \{\diam \sigma\mid \sigma\in \cK \mbox{ or } \sigma\in \cL\} < \delta
\end{equation}
then
\begin{equation} \label{eq:dmatch-epsilon}
|\dm(\rho_\mphi,\rho_\mpsi)-\dm(\rho_\aphi,\rho_\apsi) |< \epsilon.
\end{equation}
\end{lem}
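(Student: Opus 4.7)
The approach is to use the stability property (S) to localize the problem to bounding the sup-norm distances $\|\mphi-\aphi\|_\infty$ and $\|\mpsi-\apsi\|_\infty$, which in turn reduces to a uniform-continuity argument once one knows that $\aphi$ never strays beyond the coordinate-wise range of vertex values of any enclosing simplex.

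Concretely, the triangle inequality for the extended metric $\dm$ gives
\[
\bigl|\dm(\rho_\mphi,\rho_\mpsi)-\dm(\rho_\aphi,\rho_\apsi)\bigr|\le \dm(\rho_\mphi,\rho_\aphi)+\dm(\rho_\mpsi,\rho_\apsi),
\]
and applying (S) bounds the right-hand side by $\|\mphi-\aphi\|_\infty+\|\mpsi-\apsi\|_\infty$. The convention $\infty-\infty=0$ poses no issue, since the finiteness of these two correction terms forces $\dm(\rho_\mphi,\rho_\mpsi)$ and $\dm(\rho_\aphi,\rho_\apsi)$ to be either simultaneously finite or simultaneously infinite. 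It therefore suffices to make each sup-norm smaller than $\epsilon/2$ by choosing $\delta$ small.

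The main geometric fact I would establish is that, for every $\sigma\in\cK$, every $x\in\sigma$, and every coordinate index $j\in\{1,\ldots,k\}$,
\[
\min_{v\ \mbox{\scriptsize vertex of }\sigma}\varphi_j(v)\,\le\,\aphi_j(x)\,\le\,\max_{v\ \mbox{\scriptsize vertex of }\sigma}\varphi_j(v).
\]
This is proved by induction on $\dim\sigma$, following the inductive construction of $\aphi$. The base case is trivial. For the inductive step, $\aphi$ is linear on the segment $[w_\sigma,y(x)]$; at one endpoint $\aphi_j(w_\sigma)=\mu_j(\sigma)$ equals the maximum of $\varphi_j$ over vertices of $\sigma$, while at the other endpoint the inductive hypothesis applied to the proper face containing $y(x)$ places $\aphi_j(y(x))$ in the range of $\varphi_j$-values over vertices of that face, which is contained in the corresponding range for $\sigma$; linearity then preserves the containment.

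To conclude, use that $\mphi$ is continuous on the compact polyhedron $K$ and hence uniformly continuous: choose $\delta_1>0$ such that $\|x-y\|<\delta_1$ implies $|\mphi_j(x)-\mphi_j(y)|<\epsilon/4$ for every $j$; pick $\delta_2$ analogously for $\mpsi$ and set $\delta:=\min(\delta_1,\delta_2)$. For $x$ lying in any simplex $\sigma$ with $\diam\sigma<\delta$, the identities $\mphi_j(v)=\varphi_j(v)$ at vertices combined with uniform continuity place every $\varphi_j(v)$ (for $v$ a vertex of $\sigma$) within $\epsilon/4$ of $\mphi_j(x)$; the geometric fact above then forces $|\aphi_j(x)-\mphi_j(x)|<\epsilon/4$, so $\|\mphi-\aphi\|_\infty\le\epsilon/4$, and symmetrically $\|\mpsi-\apsi\|_\infty\le\epsilon/4$. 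The triangle inequality then yields a bound of $\epsilon/2<\epsilon$. The main technical obstacle is verifying the enclosing-range inequality for $\aphi$: one must track carefully how the inductive construction threads through cases (i) and (ii) in the definition of $w_\sigma$, so that linearity of $\aphi$ on $[w_\sigma,y(x)]$ propagates the vertex-range bounds from faces up to $\sigma$. Once that is in hand, the remainder is a formal combination of stability, triangle inequality, and uniform continuity.
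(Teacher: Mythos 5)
Your proof is correct and follows essentially the same route as the paper's: reduce to bounding $\|\mphi-\aphi\|_\infty$ and $\|\mpsi-\apsi\|_\infty$ via the triangle inequality and stability (S), then control these sup-norms by uniform continuity once the simplices are small. The only difference is that where the paper appeals to $\diam\aphi(\sigma)<\epsilon/4$ (justified by "uniform continuity of $\aphi$"), you prove the sharper coordinate-wise range-containment $\min_v\varphi_j(v)\le\aphi_j(x)\le\max_v\varphi_j(v)$ by induction on the construction of $\aphi$ --- a slightly more explicit justification of the same estimate.
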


\begin{proof}
Since $K$ and $L$ are compact, $\mphi$, $\mpsi$, $\aphi$, and $\apsi$ are uniformly continuous. Hence for any $\epsilon>0$, there exists $\delta>0$ such that if (\ref{eq:simplex-delta}) is satisfied then
\begin{equation} \label{eq:simplex-phi}
\max \{\diam \mphi(\sigma)\mid \sigma\in \cK \} < \epsilon/4
\end{equation}
and the same inequality holds for $\mpsi$, $\aphi$, and $\apsi$. The diameters of $\mphi(\sigma)$ and $\sigma$ are measured with respect to the maximum norm in the respective ambient spaces. Since $\varphi$ is the restriction of $\mphi$ to the vertices, and $\aphi$ interpolates $\varphi$ on the vertices, given any $x\in\sigma\in\cK$, and any vertex $v$ of $\sigma$, from (\ref{eq:simplex-phi}) we get 
\begin{equation} \label{eq:measuring-corner-eps}
\|\mphi(x) -\aphi(x)\|\leq \|\mphi(x) -\varphi(v)\|+\|\varphi(v)-\aphi(x)\|<\epsilon/2.
\end{equation}
Hence, by the choice of the maximum norm in $\R^k$, $\|\mphi -\aphi\|_\infty <\epsilon/2$. By the same arguments, $\|\mpsi -\apsi\|_\infty <\epsilon/2$. By the stability property (S) of $\dm$,
\begin{eqnarray*}
\dm(\rho_\mphi,\rho_\mpsi) & \leq & \dm(\rho_\mphi,\rho_\aphi)+\dm(\rho_\aphi,\rho_\apsi)+\dm(\rho_\apsi,\rho_\mpsi)\\
 & \leq & \|\mphi - \aphi\|_\infty +\dm(\rho_\aphi,\rho_\apsi) + \|\apsi - \mpsi\|_\infty \\
 & < & \dm(\rho_\aphi,\rho_\apsi) + \epsilon.
\end{eqnarray*}
Reversing the roles of $\mphi$, $\mpsi$ and $\aphi$, $\apsi$, we get $\dm(\rho_\aphi,\rho_\apsi)<\dm(\rho_\mphi,\rho_\mpsi)+\epsilon$ and the conclusion follows.
\end{proof}


Knowing Lemma~\ref{lem:approx}, we now turn our attention to computing $\dm(\rho_\aphi,\rho_\apsi)$.

\medskip

 The following definition sets the notation for the rank invariant of the simplicial complex filtration obtained from a discrete map $\varphi$. Next, we show that this definition gives a rank invariant coinciding with the rank invariant of the continuous function $\aphi$. Thus it is a first step in the passage from the stability of rank invariants for continuous functions to that of discrete ones. Moreover, this definition is the one which we use to implement the reduction algorithm of \cite{CaDiFe10} in our computations in Section \ref{sec:experimentation}.

\begin{defn} \label{def:rank-inv-cont-alpha}
 {\em
Consider the discrete map $\varphi:\cV(\cK)\to \R^k$ defined on vertices of a simplicial complex $\cK$. The {\em $q$'th real space variable rank invariant} or, shortly, {\em $q$'th real rank invariant} of $\varphi$ is the function $\rho_\varphi^q:\Delta^k_+ \to \N$ defined on each pair $(\alpha,\beta)\in \Delta^k_+$ as the rank of the map
\[
H_q(j^{(\alpha,\beta)}): H_q(K_\alpha) \to
H_q(K_\beta)
\]
induced by the inclusion map $j^{(\alpha,\beta)}:K_\alpha \hookrightarrow K_\beta$ on simplicial sublevel sets.
 }
\end{defn}

\begin{thm} \label{th:cont-simpl-rank}
Given any discrete function $\varphi:\cV(\cK)\to \R^k$ on the set of vertices of a simplicial complex $\cK$ and its axis-wise interpolation $\aphi$, we have the equality of $q$'th real rank invariants
\[
\rho_\varphi^q = \rho_\aphi^q.
\]
\end{thm}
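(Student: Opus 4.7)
The plan is to deduce the equality from Theorem~\ref{th:sublevel-cont-vs-simplicial} via a naturality argument on inclusions. For any $(\alpha,\beta)\in \Delta^k_+$, I would first observe that the following diagram of topological inclusions commutes:
\[
\begin{array}{ccc}
K_\alpha & \hookrightarrow & K_\beta \\
\cap & & \cap \\
K_{\aphi\preceq\alpha} & \hookrightarrow & K_{\aphi\preceq\beta}
\end{array}
\]
Indeed, by Lemma~\ref{lem:corner}~(i), $K_\alpha\subset K_{\aphi\preceq\alpha}$ and $K_\beta\subset K_{\aphi\preceq\beta}$, and all four maps are genuine inclusions, so the two compositions $K_\alpha\hookrightarrow K_\beta\hookrightarrow K_{\aphi\preceq\beta}$ and $K_\alpha\hookrightarrow K_{\aphi\preceq\alpha}\hookrightarrow K_{\aphi\preceq\beta}$ coincide as set-theoretic inclusions.

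Next, I would apply the homology functor $H_q$ to obtain the commutative square
\[
\begin{array}{ccc}
H_q(K_\alpha) & \xrightarrow{H_q(j^{(\alpha,\beta)})} & H_q(K_\beta) \\
\downarrow & & \downarrow \\
H_q(K_{\aphi\preceq\alpha}) & \xrightarrow{H_q(i^{(\alpha,\beta)})} & H_q(K_{\aphi\preceq\beta})
\end{array}
\]
where the vertical arrows are induced by the inclusions $K_\alpha\hookrightarrow K_{\aphi\preceq\alpha}$ and $K_\beta\hookrightarrow K_{\aphi\preceq\beta}$. By Theorem~\ref{th:sublevel-cont-vs-simplicial}, each of these inclusions is the inclusion of a strong deformation retract, hence induces an isomorphism in homology. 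Consequently, both vertical maps in the square are isomorphisms.

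From the commutativity of the square and the invertibility of the vertical arrows, the ranks of the top and bottom horizontal maps are equal, i.e.
\[
\rank H_q(j^{(\alpha,\beta)}) = \rank H_q(i^{(\alpha,\beta)}).
\]
By the definitions of $\rho_\varphi^q$ (Definition~\ref{def:rank-inv-cont-alpha}) and $\rho_\aphi^q$, this is precisely $\rho_\varphi^q(\alpha,\beta)=\rho_\aphi^q(\alpha,\beta)$, and since $(\alpha,\beta)\in\Delta^k_+$ was arbitrary the equality of functions follows. No genuine obstacle is expected here: the entire argument reduces to naturality of homology applied to a square whose vertical arrows have already been shown to be isomorphisms in the previous section, so the only care needed is to state cleanly that the diagram really does commute as a diagram of inclusions.
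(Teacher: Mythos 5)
Your proposal is correct and follows essentially the same route as the paper: a commutative square relating the simplicial inclusions $K_\alpha\hookrightarrow K_\beta$ to the continuous inclusions $K_{\aphi\preceq\alpha}\hookrightarrow K_{\aphi\preceq\beta}$, with vertical isomorphisms supplied by Theorem~\ref{th:sublevel-cont-vs-simplicial}, from which equality of ranks follows. The only cosmetic difference is that the paper orients the vertical arrows as the retraction-induced isomorphisms rather than the inclusion-induced ones, which changes nothing.
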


\begin{proof} Consider any $(\alpha,\beta)\in \Delta^k_+$, the inclusion maps $i^{(\alpha,\beta)}:K_{\aphi\leq
\alpha}\hookrightarrow K_{\aphi\leq \beta}$, and $j^{(\alpha,\beta)}:K_\alpha \hookrightarrow K_\beta$.
Theorem~\ref{th:sublevel-cont-vs-simplicial} implies that for every $q\in\Z$ we have the following commutative diagram
\[
\begin{array}{ccc}
H_q(K_{\aphi\preceq \alpha}) & \mapright{H_q(i^{(\alpha,\beta)})} & H_q(K_{\aphi\preceq \beta}) \\
\mapdown{\cong} & & \mapdown{\cong} \\
H_q(K_\alpha) & \mapright{H_q(j^{(\alpha,\beta)})} & H_q(K_\beta)
\end{array}
\]
where the vertical arrows are the isomorphisms induced by the corresponding retractions. Thus $\rank H_q(i^{(\alpha,\beta)}) = \rank H_q(j^{(\alpha,\beta)})$.

\end{proof}

In the sequel, we will once again use $\rho_\varphi$ to refer to real rank invariants of arbitrary order. In conclusion we obtain that the distance between the rank invariants of two measuring (or interpolation) functions can be approximated using only the corresponding simplicial sublevel sets.

\begin{cor} \label{cor:dmatch-approx}
Let $\mphi:K\to \R^k$, $\mpsi:L\to \R^k$ be two continuous measuring functions on the carriers of complexes $\cK$ and $\cL$ and let $\varphi:\cV(\cK)\to \R^k$, $\psi:\cV(\cL)\to \R^k$ be the discretizations of $\mphi$ and $\mpsi$ on the sets of vertices of $\cK$ and $\cL$, respectively. For any $\epsilon>0$ there exists $\delta>0$ such that if
\[
\max \{\diam \sigma\mid \sigma\in \cK \mbox{ or } \sigma\in \cL\} < \delta
\]
then
\begin{equation} \label{eq:dmatch-approx}
|\dm(\rho_\mphi,\rho_\mpsi)-\dm(\rho_\varphi,\rho_\psi) |< \epsilon.
\end{equation}
\end{cor}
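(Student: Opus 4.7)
The plan is to assemble the corollary directly from the two preceding results, namely Lemma~\ref{lem:approx} and Theorem~\ref{th:cont-simpl-rank}. There is essentially no new content to produce: all the hard work (the deformation retraction of Theorem~\ref{th:sublevel-cont-vs-simplicial}, the equality of rank invariants, and the approximation estimate in Lemma~\ref{lem:approx}) has already been carried out. My job is just to chain these statements together.

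First, given $\epsilon > 0$, I would invoke Lemma~\ref{lem:approx} with the same $\epsilon$, applied to the continuous functions $\mphi$ and $\mpsi$ and their axis-wise linear interpolations $\aphi$ and $\apsi$. This produces a $\delta > 0$ such that whenever every simplex in $\cK$ and $\cL$ has diameter less than $\delta$, the inequality $|\dm(\rho_\mphi,\rho_\mpsi)-\dm(\rho_\aphi,\rho_\apsi)| < \epsilon$ holds.

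Next, I would apply Theorem~\ref{th:cont-simpl-rank} separately to the pairs $(\varphi,\aphi)$ and $(\psi,\apsi)$. This yields the pointwise equalities $\rho_\varphi = \rho_\aphi$ and $\rho_\psi = \rho_\apsi$ of real rank invariants. Since $\dm$ is defined as a function of the rank invariants alone, this gives at once
\[
\dm(\rho_\varphi,\rho_\psi) = \dm(\rho_\aphi,\rho_\apsi).
\]
Substituting into the estimate from Lemma~\ref{lem:approx} yields exactly \eqref{eq:dmatch-approx}, with the same $\delta$ supplied by the lemma.

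There is no genuine obstacle here; the only thing to be careful about is the convention $\infty-\infty=0$ adopted before Lemma~\ref{lem:approx}, which propagates cleanly through the equality of rank invariants so that the case $H_*(K)\ne H_*(L)$ does not require separate treatment. Thus the corollary follows as a one-line consequence of the previous lemma and theorem.
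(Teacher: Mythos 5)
Your proposal is correct and matches the paper's own argument, which simply states that the corollary is immediate from Lemma~\ref{lem:approx} and Theorem~\ref{th:cont-simpl-rank}; your write-up just makes explicit the substitution $\dm(\rho_\varphi,\rho_\psi)=\dm(\rho_\aphi,\rho_\apsi)$ that the paper leaves tacit.
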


\begin{proof}
Immediate from Lemma \ref{lem:approx} and Theorem \ref{th:cont-simpl-rank}.
\end{proof}

We are now ready to give the main result of this section.

\begin{thm}\label{thm:main}
Let $X$ and $Y$ be homeomorphic triangulable topological spaces, and let $f:X\to \R^k$, $g:Y\to \R^k$ be continuous functions. Let $(K,\mphi)$ and $(L,\mpsi)$, with $K$ and $L$ carriers of complexes $\cK'$ and $\cL'$, and $\mphi:K\to\R^k$, $\mpsi:L\to \R^k$ continuous measuring functions, approximate $(X,f)$ and $(Y,g)$, respectively, in the following sense:  For a fixed $\epsilon>0$, there exist a homeomorphism $\xi:K\to X$ with $\|\mphi-f\circ \xi\|_{\infty}\le\epsilon/4$ and a homeomorphism $\zeta:L\to Y$ with $\|\mpsi-g\circ \zeta\|_{\infty}\le\epsilon/4$. Then, for any sufficiently fine subdivision $\cK$ of $\cK'$ and $\cL$ of $\cL'$,
$$\left|\dm(\rho_f,\rho_g)-\dm(\rho_\varphi,\rho_\psi)\right|\le\epsilon,$$
$\varphi:\cV(\cK)\to \R^k$, $\psi:\cV(\cL)\to \R^k$ being restrictions of $\mphi$ and $\mpsi$ on the set of vertices of $\cK$ and $\cL$, respectively.
\end{thm}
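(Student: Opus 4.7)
The plan is to chain together three ingredients via the triangle inequality for $\dm$: invariance of the rank invariant under homeomorphisms (used to pull $f$ and $g$ back to $K$ and $L$), the stability property (S) on the continuous side, and Corollary \ref{cor:dmatch-approx} on the discrete-versus-continuous side.

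First, I would observe that because $\xi:K\to X$ is a homeomorphism, the sublevel set filtration of $f$ on $X$ and that of $f\circ\xi$ on $K$ are conjugate via $\xi$, so the inclusions of sublevel sets induce the same maps in homology up to the isomorphism $H_*(\xi)$. Therefore $\rho_f=\rho_{f\circ\xi}$, and similarly $\rho_g=\rho_{g\circ\zeta}$. Since $f\circ\xi$ and $\mphi$ are both continuous functions on $K$, stability (S) applies and gives
\[
\dm(\rho_f,\rho_\mphi)=\dm(\rho_{f\circ\xi},\rho_\mphi)\le \|f\circ\xi-\mphi\|_\infty\le \epsilon/4,
\]
and, by the same reasoning, $\dm(\rho_g,\rho_\mpsi)\le \epsilon/4$.

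Next, two applications of the triangle inequality for $\dm$ yield
\[
\bigl|\dm(\rho_f,\rho_g)-\dm(\rho_\mphi,\rho_\mpsi)\bigr|\le \dm(\rho_f,\rho_\mphi)+\dm(\rho_g,\rho_\mpsi)\le \epsilon/2.
\]
It remains to control $\bigl|\dm(\rho_\mphi,\rho_\mpsi)-\dm(\rho_\varphi,\rho_\psi)\bigr|$. Here I invoke Corollary \ref{cor:dmatch-approx} applied with tolerance $\epsilon/2$: there is a $\delta>0$ such that whenever the subdivisions $\cK$ of $\cK'$ and $\cL$ of $\cL'$ satisfy
\[
\max\{\diam\sigma\mid \sigma\in\cK\text{ or }\sigma\in\cL\}<\delta,
\]
we have $\bigl|\dm(\rho_\mphi,\rho_\mpsi)-\dm(\rho_\varphi,\rho_\psi)\bigr|<\epsilon/2$. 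Combining the two bounds with one more triangle inequality gives the desired estimate $\bigl|\dm(\rho_f,\rho_g)-\dm(\rho_\varphi,\rho_\psi)\bigr|\le\epsilon$.

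The only delicate point is the first step: justifying $\rho_f=\rho_{f\circ\xi}$. Once the naturality of persistent homology under homeomorphisms of the underlying space (and the compatibility with the sublevel set construction, which is automatic since $\xi^{-1}(X_\alpha)=\{x\in K\mid (f\circ\xi)(x)\preceq\alpha\}$) is granted, the remainder is a bookkeeping exercise with triangle inequalities and an appeal to Corollary \ref{cor:dmatch-approx}. I do not expect any obstacle beyond making sure the $\epsilon/4,\epsilon/4,\epsilon/2$ budget closes to $\epsilon$.
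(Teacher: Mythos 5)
Your proposal is correct and follows essentially the same route as the paper's proof: invariance of the rank invariant under the homeomorphisms $\xi$ and $\zeta$, the stability property (S) to absorb the $\epsilon/4$ approximation errors, and Corollary \ref{cor:dmatch-approx} with tolerance $\epsilon/2$ for the continuous-to-discrete step, all assembled by triangle inequalities. The only cosmetic difference is that you state the bound in the symmetric form $\left|\dm(\rho_f,\rho_g)-\dm(\rho_\mphi,\rho_\mpsi)\right|\le\epsilon/2$ at once, whereas the paper proves the two one-sided inequalities separately; you also spell out the justification of $\rho_f=\rho_{f\circ\xi}$, which the paper simply asserts.
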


\begin{proof}
By the triangle inequality
$$\dm(\rho_f,\rho_g)\le \dm(\rho_f,\rho_{f\circ \xi})+\dm(\rho_{f\circ \xi},\rho_\mphi)+\dm(\rho_\mphi,\rho_\mpsi)+\dm(\rho_\mpsi,\rho_{g\circ \zeta})+\dm(\rho_{g\circ \zeta},\rho_g).$$
Since $\rho_f=\rho_{f\circ \xi}$ and $\rho_g=\rho_{g\circ \zeta}$, we have $\dm(\rho_f,\rho_{f\circ \xi})=0$ and $\dm(\rho_{g\circ \zeta},\rho_g)=0$. Moreover, by the stability property (S), since $\|\mphi-f\circ \xi\|_{\infty}\le\epsilon/4$ and $\|\mpsi-g\circ \zeta\|_{\infty}\le\epsilon/4$, we have $\dm(\rho_{f\circ \xi},\rho_\mphi)\le\epsilon/4$ and $\dm(\rho_\mpsi,\rho_{g\circ \zeta})\le\epsilon/4$. Therefore,
$$\dm(\rho_f,\rho_g)\le \dm(\rho_\mphi,\rho_\mpsi) +\epsilon/2.$$
By Corollary \ref{cor:dmatch-approx}, there exists $\delta>0$ such that, if $\cK$ and $\cL$ are subdivisions of $\cK'$ and $\cL'$ with $\max \{\diam \sigma\mid \sigma\in \cK \mbox{ or } \sigma\in \cL\} < \delta$, then $\dm(\rho_\mphi,\rho_\mpsi)\le \dm(\rho_\varphi,\rho_\psi) + \epsilon/2$. In conclusion we have proved that $\dm(\rho_f,\rho_g)\le \dm(\rho_\varphi,\rho_\psi) +\epsilon.$

Reversing the roles of $f$, $g$ and $\varphi$, $\psi$, we get
$\dm(\rho_\varphi,\rho_\psi)\le \dm(\rho_f,\rho_g) +\epsilon$, yielding the claim.
\end{proof}

We turn now to the question of the structure of the critical set of $\aphi$. Recall from the previous section that when $k>1$, the set of homological critical values of a function on $K$ with values in $\R^k$ may be an uncountable set, although contained in a nowhere dense set $C$ by Theorem~\ref{th:hom-crit-C}. However, the family $\{K_\alpha\}_{\alpha\in \R^k}$ of all subcomplexes of $\cK$ is finite. Thus there exists a finite representative set $\Lambda\subset \R^k$ for $C$ as the following proposition states.

\begin{prop} \label{prop:Lambda}
For any $\alpha\in C$, there exists $\lambda$ in
\[
\Lambda=\{\lambda\in C \mid \forall\; j=1,2,\ldots, k,\; \exists\; v\in\cV(\cK)\, : \lambda_j=\varphi_j(v)\}
\]
such that $K_\alpha=K_\lambda$.
\end{prop}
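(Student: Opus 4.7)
The plan is to prove the proposition by producing an explicit candidate $\lambda$ coordinate by coordinate from $\alpha$, and then verifying both that $\lambda \in \Lambda$ and that $K_\alpha = K_\lambda$.

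More precisely, given $\alpha \in C$, I will define
\[
\lambda_j := \max\{\varphi_j(v) \mid v \in \cV(\cK),\ \varphi_j(v) \le \alpha_j\}, \qquad j=1,2,\ldots,k.
\]
The first step is to check that each of these maxima is well defined. Since $\alpha \in C$, there exist an index $j_0$ and a vertex $v_0 \in \cV(\cK)$ such that $\alpha_{j_0} = \varphi_{j_0}(v_0)$ and $\varphi_i(v_0) \le \alpha_i$ for every $i$. In particular, $v_0$ belongs to the set over which the maximum defining $\lambda_j$ is taken, for every $j$, so each $\lambda_j$ exists and by construction $\lambda_j$ is attained as $\varphi_j(v_j)$ for some vertex $v_j$. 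Moreover, $\lambda \preceq \alpha$.

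Next I will verify that $\lambda \in C$, so that together with the previous step we will conclude $\lambda \in \Lambda$. Using the vertex $v_0$ above, the inequality $\varphi_i(v_0) \le \alpha_i$ and the maximality in the definition of $\lambda_i$ yield $\varphi_i(v_0) \le \lambda_i$ for all $i$. For the index $i=j_0$, the bound $\lambda_{j_0} \le \alpha_{j_0} = \varphi_{j_0}(v_0)$ combined with the inequality just shown forces $\lambda_{j_0} = \varphi_{j_0}(v_0)$. Thus $\lambda \in C_{j_0}(v_0) \subset C$, and $\lambda \in \Lambda$.

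Finally I will show $K_\alpha = K_\lambda$ by showing the corresponding equality $\cK_\alpha = \cK_\lambda$ of simplicial complexes. Since $\lambda \preceq \alpha$, the inclusion $\cK_\lambda \subseteq \cK_\alpha$ is immediate from the definition of simplicial sublevel complexes. For the reverse inclusion, if $\sigma \in \cK_\alpha$, then every vertex $v \le \sigma$ satisfies $\varphi_j(v) \le \alpha_j$ for every $j$, hence $v$ lies in the set defining $\lambda_j$ and therefore $\varphi_j(v) \le \lambda_j$ for every $j$. This gives $\sigma \in \cK_\lambda$ and completes the equality.

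The only non-routine point is step two, where one must exhibit a single pair $(j_0,v_0)$ that simultaneously witnesses $\lambda \in C$; that is where the assumption $\alpha \in C$ (rather than merely $\varphi(v) \preceq \alpha$ for various $v$) is essential, since it is what guarantees that one of the coordinate maxima is actually achieved at a vertex which dominates $\lambda$ in all coordinates at once.
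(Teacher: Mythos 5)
Your proof is correct and follows essentially the same strategy as the paper's: round each coordinate of $\alpha$ down to the largest attained vertex value, and use the cone witness $(j_0,v_0)$ for $\alpha\in C$ to check that the resulting $\lambda$ still lies in $C$. The only (harmless) difference is that you maximize $\varphi_j$ over all vertices with $\varphi_j(v)\le\alpha_j$, whereas the paper maximizes over the vertices of $\cK_\alpha$ and pins the distinguished coordinate to $\alpha_{j_0}$ exactly; both choices yield a valid $\lambda$, and your verification of $\lambda\in C$ is in fact more explicit than the paper's.
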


\begin{proof}
Since $\alpha \in C$, $\cV(\cK_\alpha)\ne \emptyset$ and there exists $j$ such that $\alpha_j=\varphi_j(\bar v_j)$ for some $\bar v_j\in \cV(\cK)$, and $\alpha_i\ge \varphi_i(\bar v_j)$, for $1\le i\le k$. For each $i\ne j$, let us take a vertex $\bar v_i\in \cV(\cK_\alpha)$ such that $\varphi_i(\bar v_i)\ge \varphi_i(v)$ for every $v\in \cV(\alpha)$. Now we set $\lambda=(\lambda_1,\ldots, \lambda_k)$, with $\lambda_j=\varphi_j(\bar v_j)$. By construction $\lambda$ belongs to $\Lambda$. Furthermore, it holds that $K_\alpha=K_\lambda$. Indeed, obviously, $K_\lambda\subseteq K_\alpha$. Moreover, for every $v\in K_\alpha$, by definition of $\bar v_i$ it holds that $\varphi_i(\bar v_i)\ge \varphi_i(v)$ for $1\le i\le k$. Equivalently, $\lambda_i\ge \varphi_i(v)$ for $1\le i\le k$, implying that $v\in K_\lambda$.
\end{proof}

The structure of set $\Lambda$ is visualized in Figure \ref{fig:C}.
The previous proposition prompts for the following definition.
\begin{defn} \label{def:disrete-rank-inv}
 {\em
Consider the discrete map $\varphi:\cV(\cK)\to \R^k$ defined on vertices of $\cK$. The {\em discrete rank invariant} of $\varphi$ is the restriction of the real rank invariant $\rho_\varphi$ to the finite domain $\Lambda^2_+:=\Delta^k_+\cap (\Lambda\times\Lambda)$.
 }
\end{defn}

Definition~\ref{def:disrete-rank-inv} gives a discrete rank invariant which is similar to the one defined in \cite{CaZo07,CaSiZo09}, except for the fact that we are using a different homological structure.

Defining a distance $\dm$ directly on the basis of $k$-dimensional rank invariants would be a task impossible to accomplish. Even when $k=2$, a pair of complexes $\cK$ and $\cL$ with an order of thousand vertices would result in computing ranks of millions of maps induced by inclusions. This motivates the one-dimensional reduction method described in the next section to compute the matching distance.

\section{Algorithm and experimentation}
\label{sec:experimentation}

For experimentation purposes, we now fix the distance between rank invariants that we will use to be the matching distance $\dmm$ defined in \cite{CaDiFe10}.

The one-dimensional reduction method presented in \cite{CaDiFe10} to compute the matching distance consists of applying the one-dimensional rank invariant along the lines $t \mapsto \vec{b} + t \vec{l}$ parameterized by $t$ and determined by pairs of vectors $(\vec{l},\vec{b})$ in a chosen grid in $\R^k\times \R^k$, where $\vec{b}$ is an initial point and $\vec{l}$ directs the line. It is assumed that all components of $\vec{l}$ are positive, and that $\vec{l} \cdot {\bf 1}=1$, $\vec{b} \cdot {\bf 1}=0$, where ${\bf 1}=(1,1,\ldots, 1)$. For all $(\alpha, \beta) \in \Delta^k_+$, there exists a unique such pair, which will be called {\em linearly admissible pair} or simply {\em admissible pair}, the set of which will be denoted $Ladm_k$. Also denote by $g_{(\vec{l},\vec{b})}:K\to \R$ and $h_{(\vec{l},\vec{b})}:L\to \R$ the one-dimensional functions given by $g_{(\vec{l},\vec{b})}(x)=\max_i\, (\varphi_i^\urcorner(x)-b_i)/l_i$ and $h_{(\vec{l},\vec{b})}(x)=\max_i\,(\psi_i^\urcorner(x)-b_i)/l_i$, where $l_i$ and $b_i$ are the $i$-th components of $\vec{l}$ and $\vec{b}$, respectively. For ease of notation, the pair $(\vec{l},\vec{b})$ may be left out of $g$ and $h$ if it is unambiguous. By \cite[Lemma 1]{CaDiFe10}, if $\alpha=\vec{b} + s \vec{l}$, then
\[
K_{\aphi\preceq \alpha}=K_{g\leq s}.
\]
This and Theorem~\ref{th:cont-simpl-rank} implies
\begin{cor} \label{th:lb-line-simpl-rank}
Consider $(\alpha,\beta)=(\vec{b} + s \vec{l},\vec{b} + t \vec{l})\in \Delta^k_+$, for some $(s,t)\in\Delta^1_+$. Then
\[
\rho_\varphi(\alpha,\beta) = \rho_g(s,t).
\]
\end{cor}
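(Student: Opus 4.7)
The plan is essentially trivial: chain together the two results already available. By Theorem \ref{th:cont-simpl-rank}, $\rho_\varphi(\alpha,\beta)=\rho_\aphi(\alpha,\beta)$, so it suffices to establish $\rho_\aphi(\alpha,\beta)=\rho_g(s,t)$. For this I would unpack both sides as ranks of maps induced by inclusions of sublevel sets, and then identify those sublevel sets using the cited \cite[Lemma 1]{CaDiFe10}. Specifically, the left-hand side is the rank of the map $H_q(K_{\aphi\preceq\alpha})\to H_q(K_{\aphi\preceq\beta})$ induced by inclusion, while the right-hand side is the rank of the corresponding map $H_q(K_{g\le s})\to H_q(K_{g\le t})$. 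The cited lemma yields $K_{\aphi\preceq\alpha}=K_{g\le s}$, and analogously $K_{\aphi\preceq\beta}=K_{g\le t}$ using $\beta=\vec b + t\vec l$; hence the two inclusions coincide as maps of topological spaces, and the induced homology maps have equal rank.

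The only point requiring any care is verifying that the hypotheses of the cited lemma are in force along the pair $(\alpha,\beta)$. This is immediate: the admissibility of $(\vec l,\vec b)\in Ladm_k$ and the definition $g(x)=\max_i(\aphi_i(x)-b_i)/l_i$ are exactly what the cited lemma requires, and the parametrization $\alpha=\vec b + s\vec l$ given in the hypothesis matches the one used there. I do not anticipate any substantive obstacle; the proof should occupy only a couple of lines in the final text, essentially the display $\rho_\varphi(\alpha,\beta)=\rho_\aphi(\alpha,\beta)=\rho_g(s,t)$ with a short parenthetical justification of each equality.
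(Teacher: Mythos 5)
Your proposal is correct and follows exactly the paper's argument: the paper likewise combines \cite[Lemma 1]{CaDiFe10} (giving $K_{\aphi\preceq\alpha}=K_{g\leq s}$ and $K_{\aphi\preceq\beta}=K_{g\leq t}$) with Theorem~\ref{th:cont-simpl-rank} to conclude the equality of ranks. No differences worth noting.
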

The above theorem shows that it is legitimate to apply the reduction method of \cite{CaDiFe10} to simplicial sublevel sets. Following \cite[Definition 11]{CaDiFe10}, we define the {\em multidimensional matching distance} between the rank invariants $\rho_\varphi$ and $\rho_\psi$ to be
$$\dmm(\rho_\varphi,\rho_\psi) = \sup_{(\vec{l},\vec{b})\in Ladm_k} \min_{i=1,\ldots, k} l_i \,
\dmm(\rho_{g_{(\vec{l},\vec{b})}},\rho_{h_{(\vec{l},\vec{b})}}).$$
In this section, the value $\min_{i=1,\ldots, k} l_i \,
\dmm(\rho_{g_{(\vec{l},\vec{b})}},\rho_{g_{(\vec{l},\vec{b})}})$ will be denoted
$\mathrm{d_m} (\rho_\varphi,\rho_\psi)$ or $\mathrm{d_m}_{(\vec{l},\vec{b})} (\rho_\varphi,\rho_\psi)$ and referred to as the \textit{rescaled one-dimensional matching distance}. The computational problem is, given a threshold value $\epsilon > 0$, computing
an approximate matching distance $\widetilde{\dmm}(\rho_\varphi,\rho_\psi)$ on a suitable finite subset
$A \subset Ladm_k$ such that
\begin{equation} \label{eq:errorbound}
\widetilde{\dmm}(\rho_\varphi,\rho_\psi) \leq \dmm(\rho_\varphi,\rho_\psi) \leq\widetilde{\dmm}(\rho_\varphi,\rho_\psi) + \epsilon.
\end{equation}

\subsection{Algorithm} \label{subsect:algorithm}

Our algorithm's inputs consist of lists of simplices of $\cK$ and $\cL$ of highest dimension together with their adjacency relations and vertices, and of the values of normalized measuring functions $\varphi: \cV(\cK) \to \R^2$ and $\psi: \cV(\cL) \to \R^2$, as well as a tolerance $\epsilon$.\footnote{Due to the finite precision of computer arithmetic, the codomain of the functions $\varphi_i$ and $\psi_i$ is in reality $10^{-p} \,\Z$ rather than $\R$. In our computations we tended to use $p=6$, that is, a precision of up to six digits after the decimal point.} In our computations, we have confined ourselves to the case where $\cK$ and $\cL$ are triangular meshes. Its output is an approximate matching distance $\widetilde{\dmm}(\rho_\varphi,\rho_\psi)$. To compute the one-dimensional persistent homology on admissible pairs, we use the persistent homology
software JPlex \cite{JPlex}. By default, JPlex computes the persistent Betti numbers over $\Z_{11}$ of a
discretely indexed filtration of simplicial complexes. We build this filtration by adding simplices
in the following recursive way. We first order the values attained by the one-dimensional measuring
function $g$ in increasing order, $\{g_1, \ldots, g_N\}$. A finite filtration $\{\cK_1, \ldots,
\cK_N\}$ is then built by inserting simplices $\sigma\in\cK$. If $\sigma = \{v\}$, where $v$ is a vertex,
we put $\{v\}$ into $\cK_i$ if $g(v) \leq g_i$. Otherwise, $\sigma\in\cK_i$ if all its vertices are
in $\cK_i$. Similarly, the function $h$ is used to build a finite filtration $\{\cL_1, \ldots,
\cL_M\}$ using simplices of $\cL$.

The set of admissible pairs $Ladm_2$ is the set of quadruples $(a,1-a,b,-b)\in\R^2\times\R^2$ such that $0<a<1$. As described in
\cite[Remark 3.2]{BiCe*10}, it is possible to avoid computation of
the one-dimensional matching distance over a large portion of $Ladm_2$. Since the functions $\varphi$ and $\psi$ are normalized, $C = \max\{\|\varphi\|,
\|\psi\|\} = 1$. Let $Ladm_2^*$ be the set of admissible pairs such that $|b|<1$. Then, to compute the maximal
value of $\dmm(\rho_g,\rho_h)$ over $Ladm_2 \backslash Ladm_2^*$, it is sufficient to consider the
two admissible pairs $(a,1-a,b,-b) = (1/2,1/2,2,-2)$ and $(1/2,1/2,-2,2)$. The details
can be found in \cite{BiCe*10}.

It follows from a generalization of the Error Bound Theorem (\cite[Theorem 3.4]{BiCe*10} and \cite{CeFr}) to persistent homology of arbitrary order that if 
for $(\vec{l},\vec{b})$ and $(\vec{l}',\vec{b}') \in Ladm_2$, $\|(\vec{l},\vec{b}) - (\vec{l}',\vec{b}')\| \leq \delta$, then for normalized functions $\varphi$ and $\psi$
$$|\,\mathrm{d_m}_{(\vec{l},\vec{b})} (\rho_\varphi,\rho_\psi) - \mathrm{d_m}_{(\vec{l}',\vec{b}')}
(\rho_\varphi,\rho_\psi)| \leq 18 \delta.$$
This suggests that in order to satisfy Equation~(\ref{eq:errorbound}), it suffices to choose admissible pairs $(\vec{l},\vec{b})\in Ladm_2^*$ at a distance within $\epsilon/9$ of each other, guaranteeing that every member of $Ladm_2^*$ is within $\epsilon/18$ of a tested pair. In practice, our algorithm is reminiscent of the grid algorithm shown in Section 3 of \cite{BiCe*10}, in the sense that we take pairs at a distance of $1/2^N$ of each other with $N$ sufficiently large. We observe that the set $Ladm_2$ is in bijective correspondence with $\{(a,b) \in\R^2 \, | \, a\in (0,1), b\in\R\}$, and so we will speak of computing $\mathrm{d_m} (\rho_\varphi,\rho_\psi)$ at a point $P = (a,b)$ of the preceding set. The lattice of points on which we compute this rescaled matching distance is chosen as follows: choose $N \in \N$ such that $1/2^N \leq \epsilon/18$, and choose $P_{ij} = (a_i, b_j), i=0, \ldots, 2^N - 1, j=0, \ldots, 2^{N+1} - 1$ such that $a_i = (2i+1)/2^{N+1}, b_j = 1 - (2j+1)/2^{N+1}$.

\subsection{Examples of topological aliasing}\label{subsect:aliasing}

Our experimentations have been made on triangular meshes of compact 2D surfaces. In doing so, the influence on experimental results of the concept of topological aliasing discussed in Section \ref{sec:simplicial} became apparent. Namely, we used our algorithm to compare in a pairwise manner 10 cat models, a selection of which is found in Figure~\ref{fig:models}. We used for $\varphi_i$ and $\psi_i$, $i=1,2$, the following functions. Assume that the model $\cK$ is such that its vertex set $\cV(\cK) = \{v_1, \ldots, v_n\}$ and compute the following principal vector:
$$\vec{w} = \frac{\sum_{i=1}^n (v_i-c)\|v_i-c\|_2}{\sum_{i=1}^n\|v_i-c\|_2^2},$$
where $c$ is the centre of mass of $K$ defined by taking the weighted average of the centres of each triangle. Let $d$ be the line passing through $c$ having $\vec{w}$ as its direction vector, and let $\pi$ be the plane passing through $c$ having $\vec{w}$ as its normal vector.
 We defined
$$\varphi_1(v_i) = 1 - \frac{\mathrm{dist}(v_i,d)}{\max_{j=1, \ldots, n}\mathrm{dist}(v_j,d)}$$
and
$$\varphi_2(v_i) = 1 - \frac{\mathrm{dist}(v_i,\pi)}{\max_{j=1, \ldots, n}\mathrm{dist}(v_j,\pi)},$$
where $\mathrm{dist}(v,d)$ and $\mathrm{dist}(v,\pi)$ are defined in the usual way, as the minimal Euclidean distance between $v$ and the points on $d$ or $\pi$. The functions $\psi_1$ and $\psi_2$ were defined similarly using the model $\cL$. We then repeated the same procedure on the barycentric subdivisions of the models, with the value of the function at the new vertices defined using the linear interpolant. We found out that in this case the computed matching distance did not always yield the same result as when using the original unsubdivided models. However, replacing the linear interpolant by the axis-wise linear interpolant allowed us to retrieve the same results.

\begin{figure}[ht]
\begin{center}
\begin{tabular}{ccc}
\includegraphics[width=0.35\textwidth]{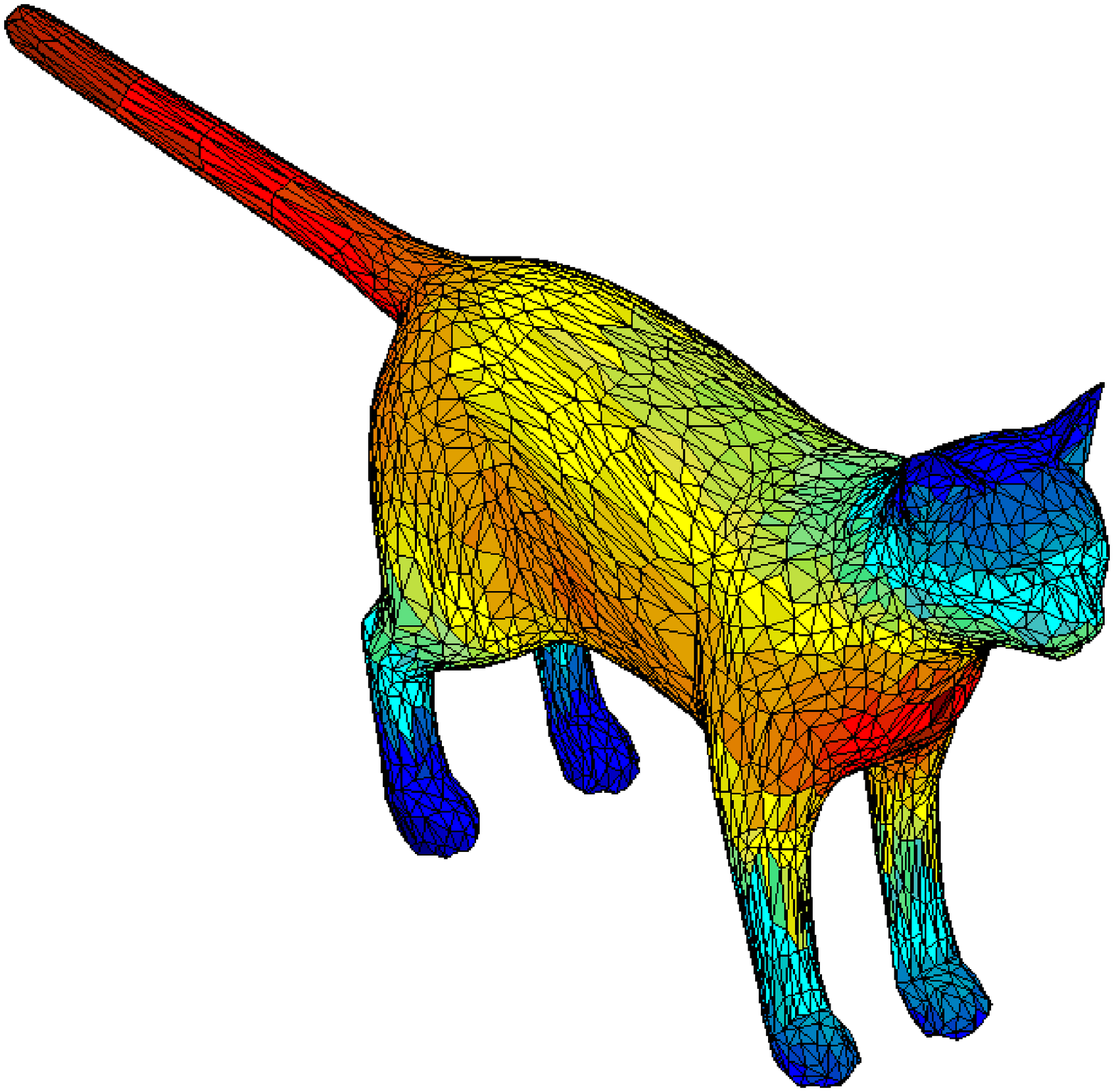} &
\includegraphics[width=0.35\textwidth]{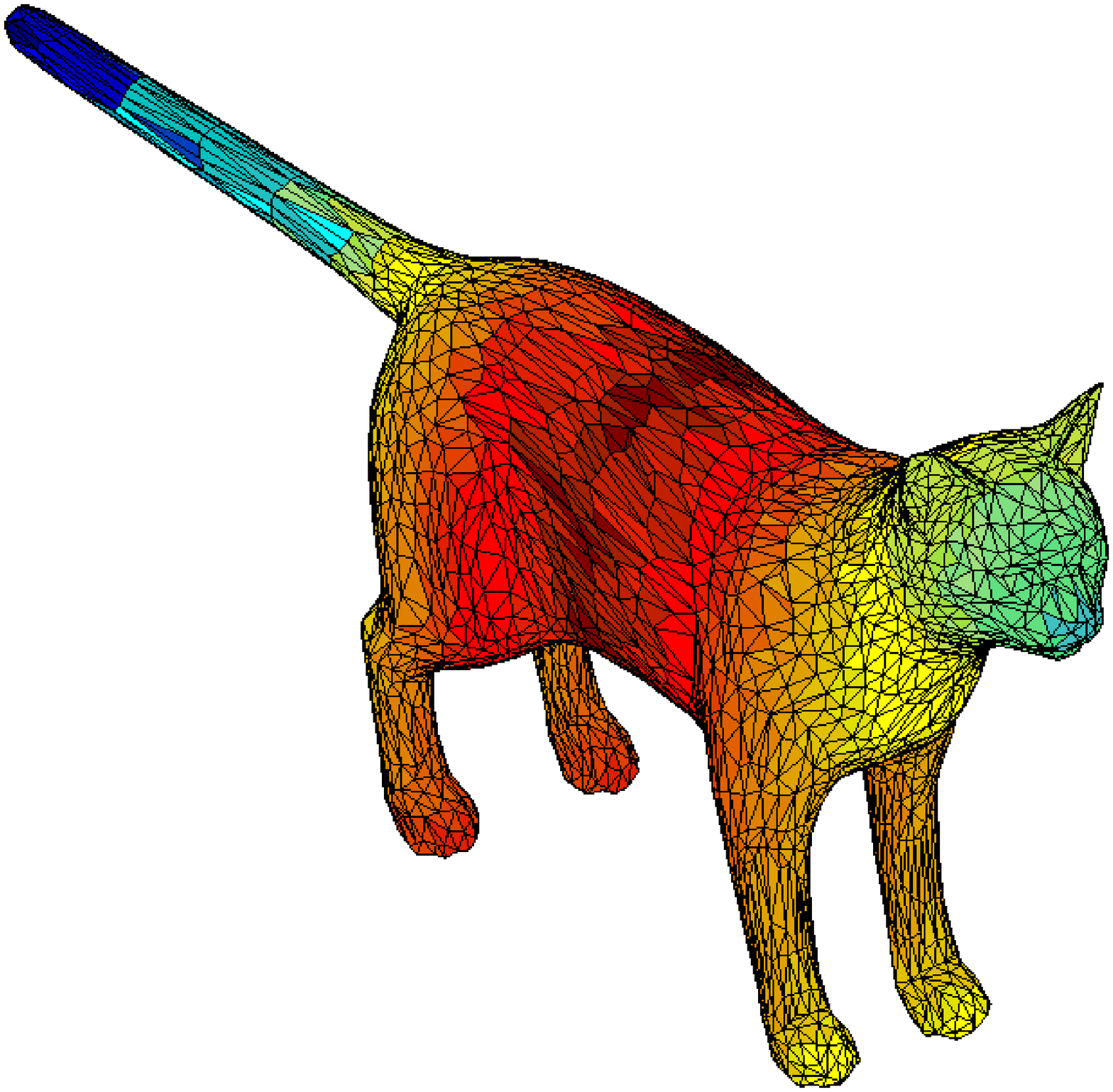} & \multirow{5}{*}{\includegraphics[width=0.15\textwidth]{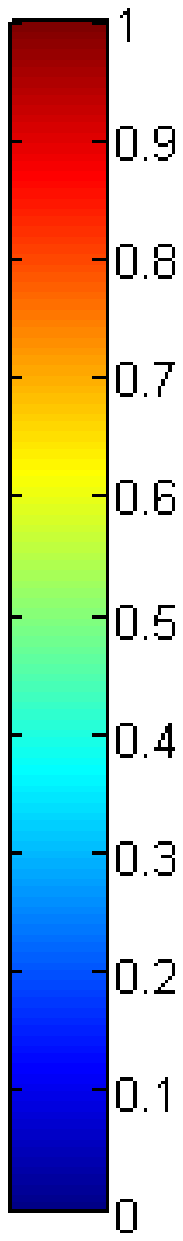}}\\
\includegraphics[width=0.35\textwidth]{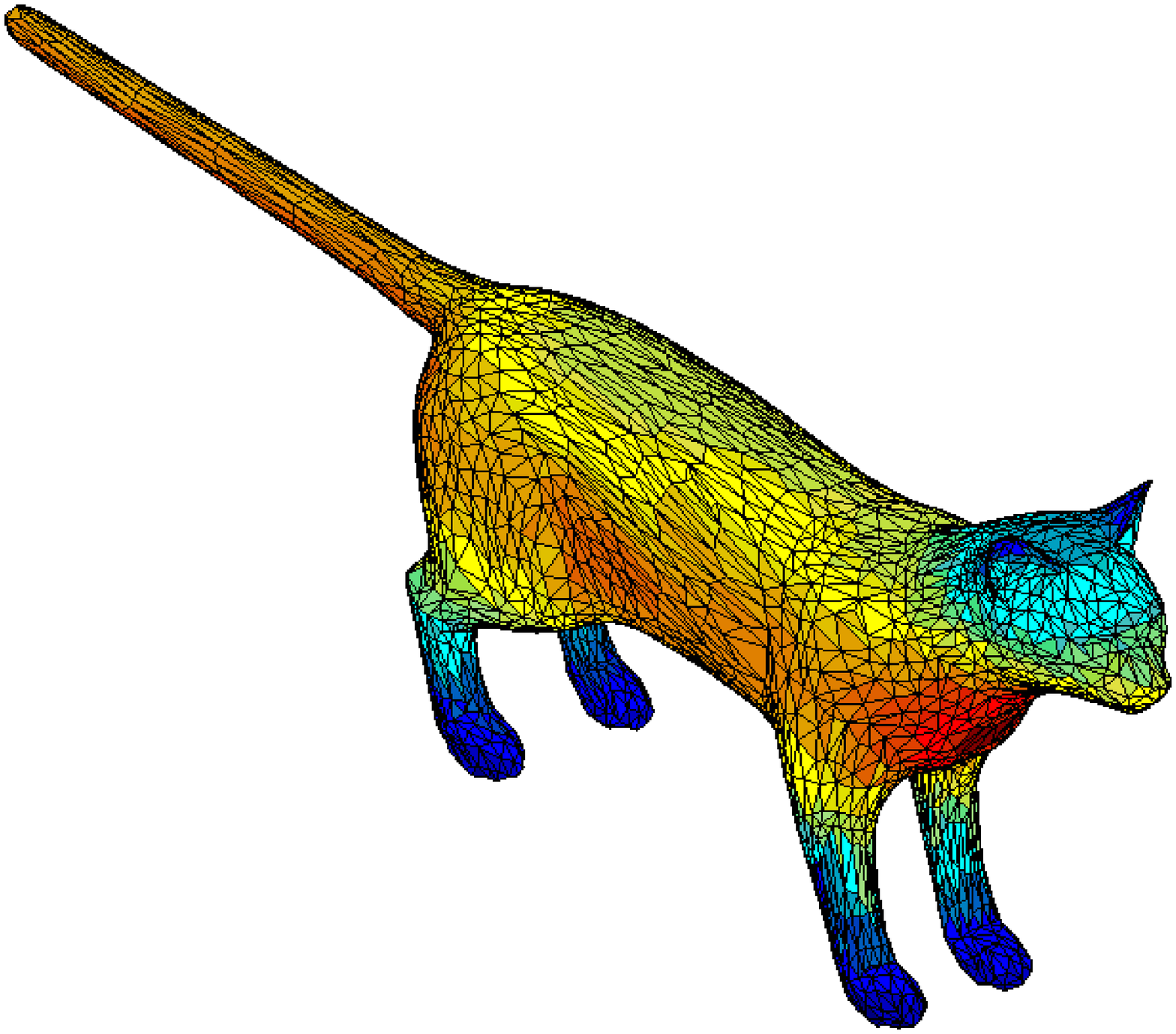}&
\includegraphics[width=0.35\textwidth]{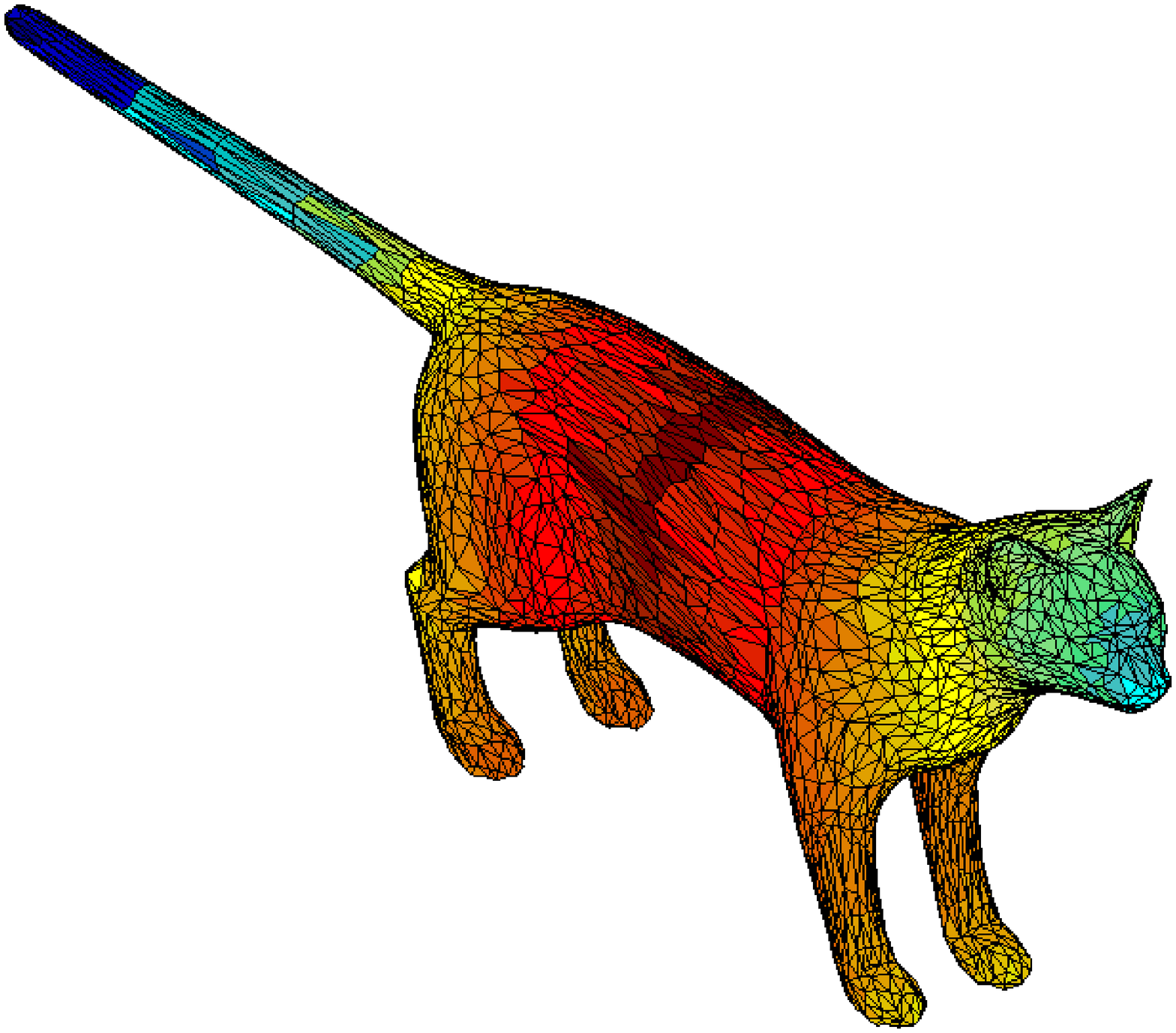}&\\
\includegraphics[width=0.35\textwidth]{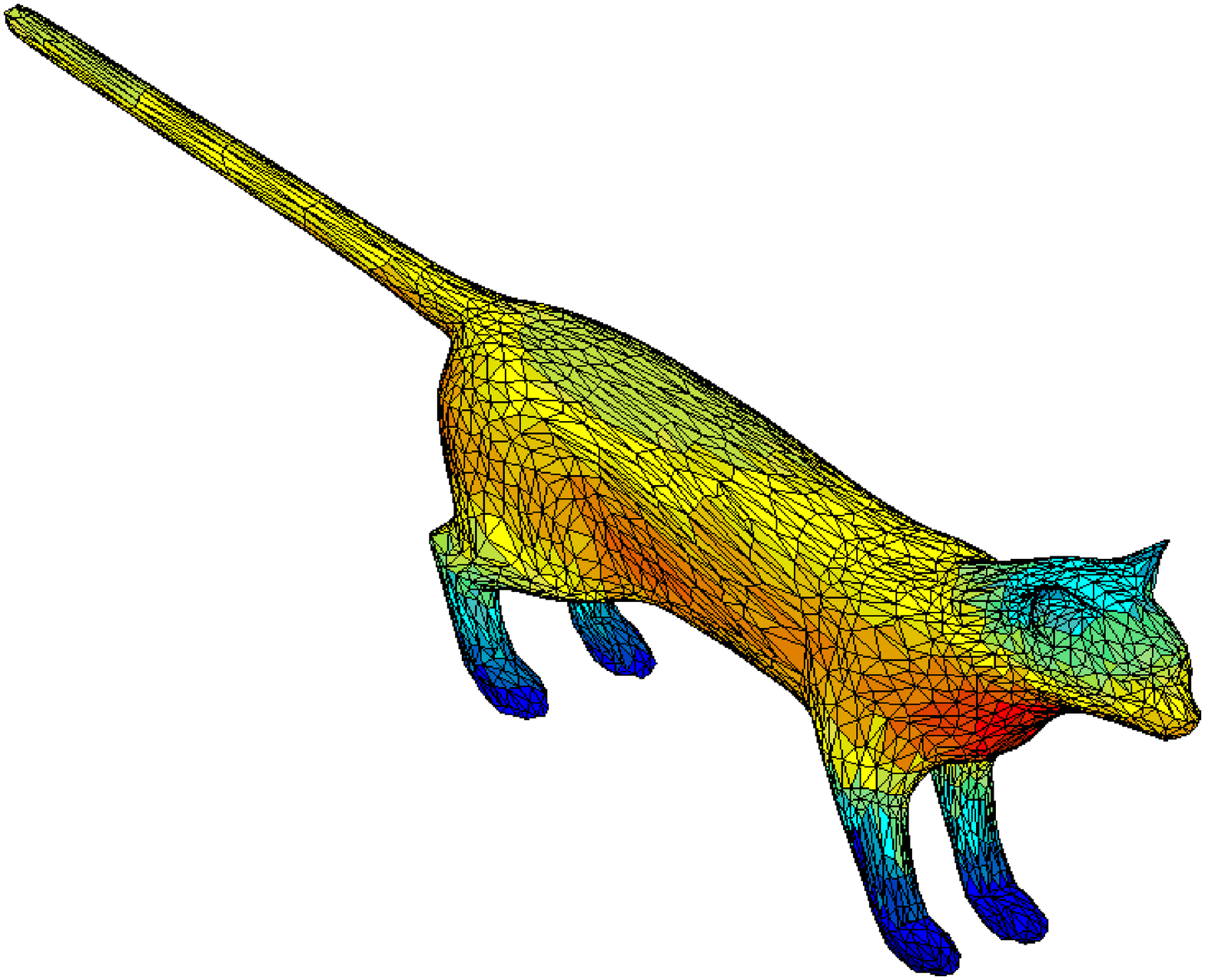}&
\includegraphics[width=0.35\textwidth]{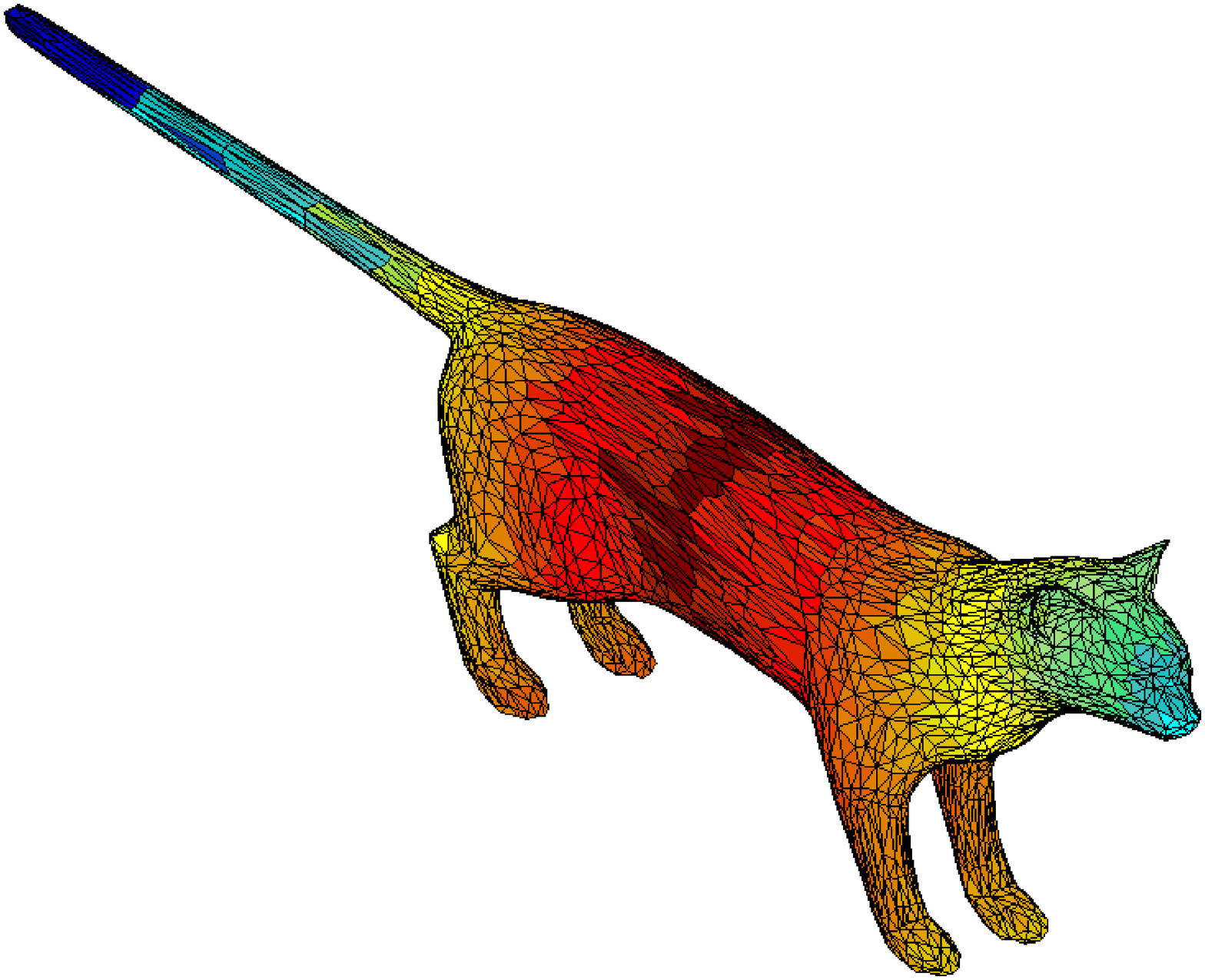}&\\
\includegraphics[width=0.35\textwidth]{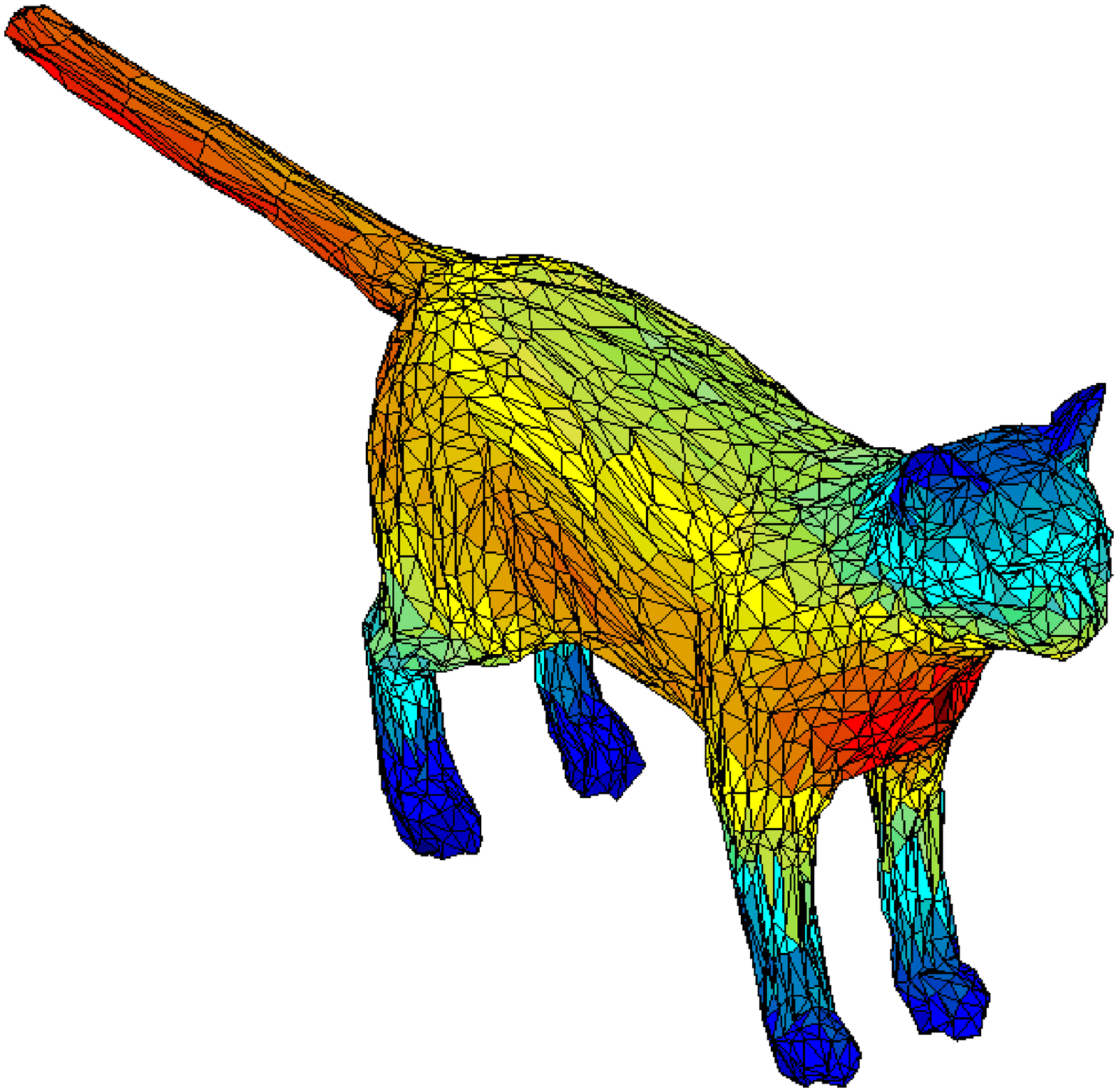}&
\includegraphics[width=0.35\textwidth]{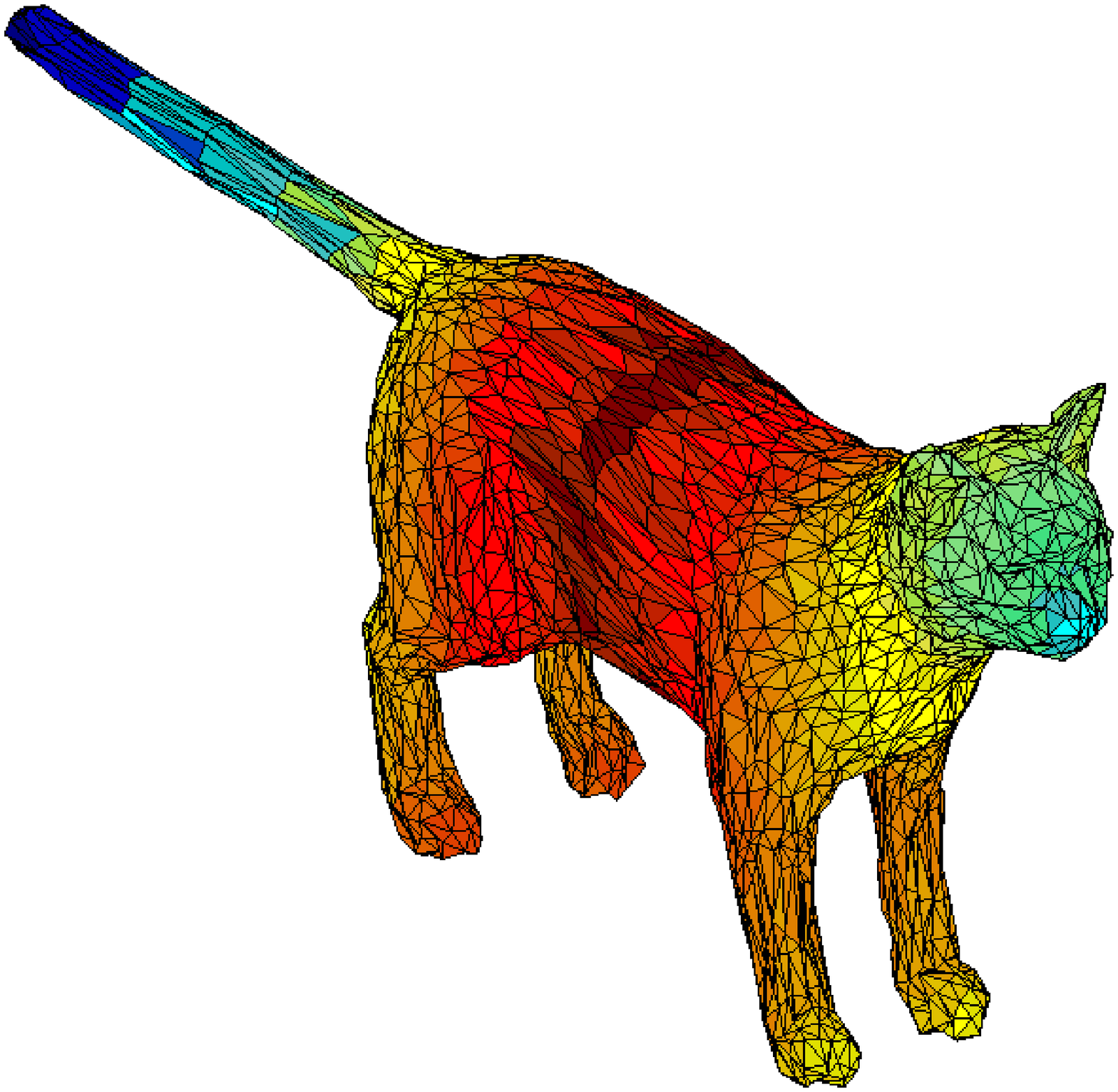}&\\
\includegraphics[width=0.35\textwidth]{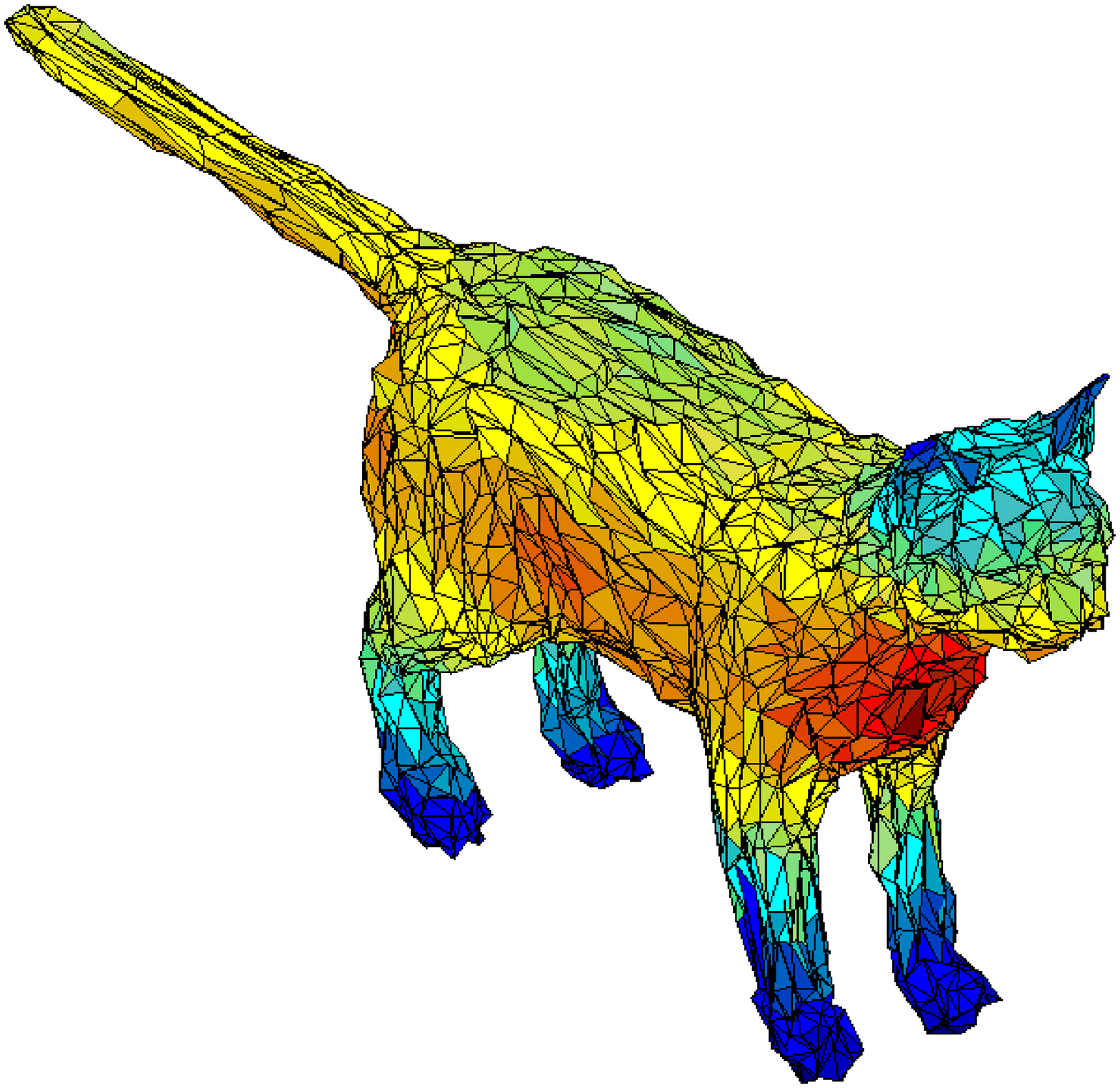}&
\includegraphics[width=0.35\textwidth]{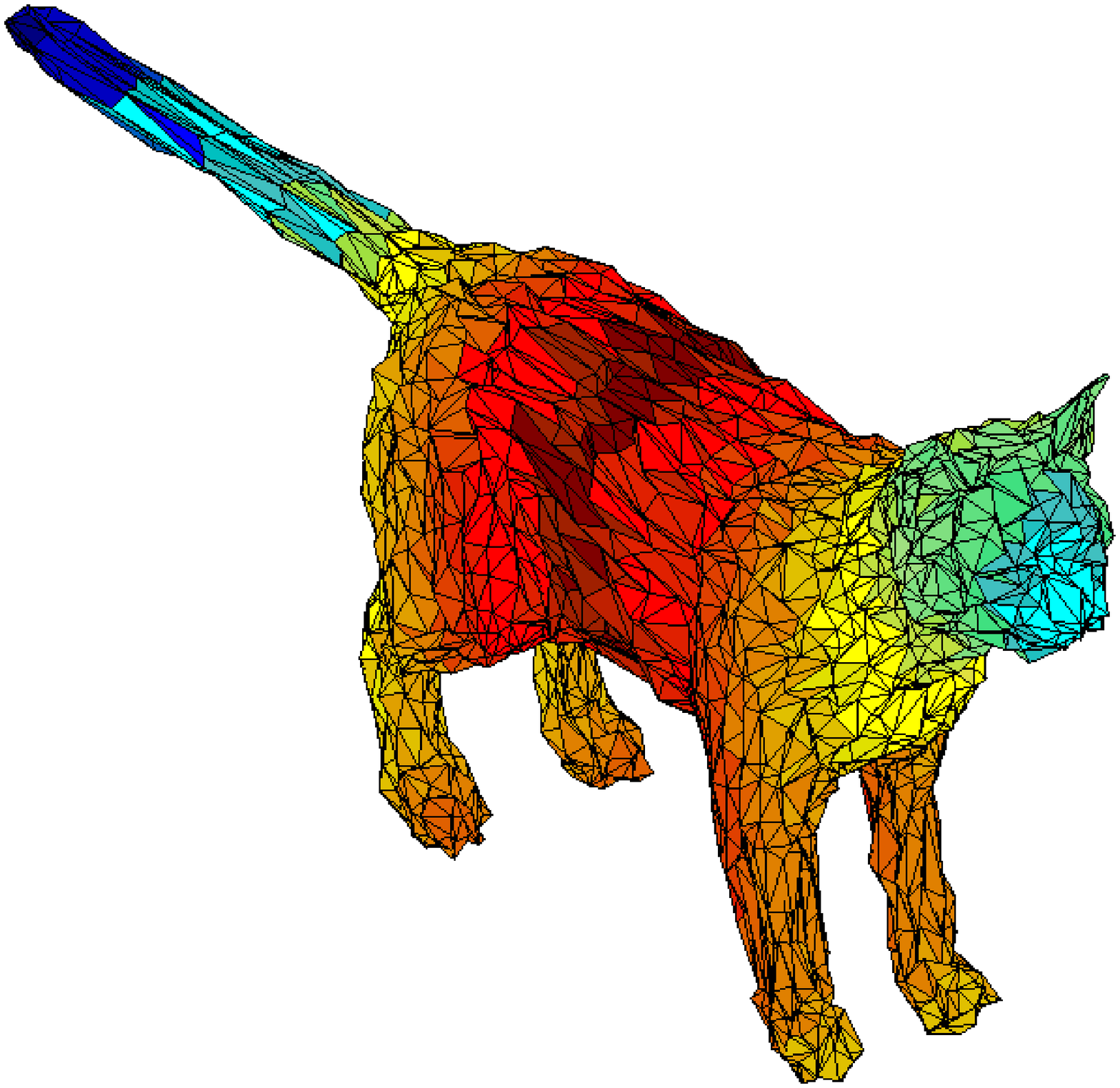}&
\end{tabular}
\end{center}
\caption{The models cat0, cat0-tran1-1, cat0-tran1-2, cat0-tran2-1 and cat0-tran2-2 are shown along with the values of $\varphi_1$ and $\varphi_2$. Models are courtesy of the authors of \cite{BiCe*10}.} \label{fig:models}
\end{figure}

We show in Table~\ref{table:results} a selected subset of our results, and in Figure~\ref{fig:models} images of the five models for which results are shown. The first two rows of numbers in each table represent the 1D matching distance computed using each of $\varphi_i$ and $\psi_i$ for $i=1, 2$ respectively, while the last three rows represent the approximated 2D matching distance computed for three different tolerance levels $\epsilon$. The column named ``Nonsub'' shows the distances computed on the original model, while ``Linear'' and ``Axis-wise'' show those computed on the subdivided models with respectively the linear and axis-wise linear interpolants. ``Diff'' and ``\% Diff'' show the difference and relative difference between the matching distance results for the unsubdivided models and subdivided models with linear interpolation.

\begin{table}[ht]
\begin{center}
\begin{tabular}{|c|c|c|c|c|c|}\cline{2-6}
\multicolumn{1}{c}{ } & \multicolumn{5}{|c|}{cat0 vs. cat0-tran1-1}\\
\cline{2-6}
\multicolumn{1}{c|}{ } & Nonsub & Linear & Axis-wise & Diff & \% Diff\\
\hline
\hline
& 0.031129 & 0.031129 & 0.031129 & \textit{0.000000} & \textit{0.000000}\\
\cline{2-6}
& 0.039497 & 0.039497 & 0.039497 & \textit{0.000000} & \textit{0.000000}\\
\cline{2-6}
$H_1$ & 0.039497 & 0.039497 & 0.039497 & \textit{0.000000} & \textit{0.000000}\\
\cline{2-6}
& 0.046150 & 0.039497 & 0.046150 & \textit{-0.006653} & \textit{-16.844317}\\
\cline{2-6}
& \textbf{0.046150} & \textbf{0.040576} & \textbf{0.046150} & \textbf{\textit{-0.005574}} & \textbf{\textit{-13.737185}}\\
\hline
\hline
& 0.118165 & 0.118165 & 0.118165 & \textit{0.000000} & \textit{0.000000}\\
\cline{2-6}
& 0.032043 & 0.032043 & 0.032043 & \textit{0.000000} & \textit{0.000000}\\
\cline{2-6}
$H_0$ & 0.194217 & 0.177001 & 0.194217 & \textit{-0.017216} & \textit{-9.726499}\\
\cline{2-6}
& 0.224227 & 0.203102 & 0.224227 & \textit{-0.021125} & \textit{-10.401178}\\
\cline{2-6}
& \textbf{0.225394} & \textbf{0.207266} & \textbf{0.225394} & \textbf{\textit{-0.018128}} & \textbf{\textit{-8.746249}}\\
\hline
\end{tabular}
\\[0.5cm]
\begin{tabular}{|c|c|c|c|c|c|}\cline{2-6}
\multicolumn{1}{c}{ } & \multicolumn{5}{|c|}{cat0-tran1-2 vs. cat0-tran2-1}\\
\cline{2-6}
\multicolumn{1}{c|}{ } & Nonsub & Linear & Axis-wise & Diff & \% Diff\\
\hline
\hline
& 0.017272 & 0.017272 & 0.017272 & \textit{0.000000} & \textit{0.000000}\\
\cline{2-6}
& 0.026101 & 0.026101 & 0.026101 & \textit{0.000000} & \textit{0.000000}\\
\cline{2-6}
$H_1$ & 0.026101 & 0.028686 & 0.026101 & \textit{0.002585} & \textit{9.903835}\\
\cline{2-6}
& 0.034314 & 0.028686 & 0.034314 & \textit{-0.005628} & \textit{-19.619327}\\
\cline{2-6}
& \textbf{0.034314} & \textbf{0.029188} & \textbf{0.034314} & \textbf{\textit{-0.005126}} & \textbf{\textit{-17.562012}}\\
\hline
\hline
& 0.182985 & 0.182985 & 0.182985 & \textit{0.000000} & \textit{0.000000}\\
\cline{2-6}
& 0.018951 & 0.018951 & 0.018951 & \textit{0.000000} & \textit{0.000000}\\
\cline{2-6}
$H_0$ & 0.192872 & 0.188365 & 0.192872 & \textit{-0.004507} & \textit{-2.392695}\\
\cline{2-6}
& 0.207480 & 0.202844 & 0.207480 & \textit{-0.004636} & \textit{-2.285500}\\
\cline{2-6}
& \textbf{0.208451} & \textbf{0.204511} & \textbf{0.208451} & \textbf{\textit{-0.003940}} & \textbf{\textit{-1.926547}}\\
\hline
\end{tabular}
\\[0.5cm]
\begin{tabular}{|c|c|c|c|c|c|}\cline{2-6}
\multicolumn{1}{c}{ } & \multicolumn{5}{|c|}{cat0-tran2-1 vs. cat0-tran2-2}\\
\cline{2-6}
\multicolumn{1}{c|}{ } & Nonsub & Linear & Axis-wise & Diff & \% Diff\\
\hline
\hline
& 0.022001 & 0.022001 & 0.022001 & \textit{0.000000} & \textit{0.000000}\\
\cline{2-6}
& 0.034288 & 0.034288 & 0.034288 & \textit{0.000000} & \textit{0.000000}\\
\cline{2-6}
$H_1$ & 0.034288 & 0.034288 & 0.034288 & \textit{0.000000} & \textit{0.000000}\\
\cline{2-6}
& 0.045545 & 0.035702 & 0.045545 & \textit{-0.009843} & \textit{-27.569884}\\
\cline{2-6}
& \textbf{0.045545} & \textbf{0.037061} & \textbf{0.045545} & \textbf{\textit{-0.008484}} & \textbf{\textit{-22.891989}}\\
\hline
\hline
& 0.095677 & 0.095677 & 0.095677 & \textit{0.000000} & \textit{0.000000}\\
\cline{2-6}
& 0.032966 & 0.032966 & 0.032966 & \textit{0.000000} & \textit{0.000000}\\
\cline{2-6}
$H_0$ & 0.178776 & 0.182322 & 0.178776 & \textit{0.003546} & \textit{1.983488}\\
\cline{2-6}
& 0.202770 & 0.196977 & 0.202770 & \textit{-0.005793} & \textit{-2.940952}\\
\cline{2-6}
& \textbf{0.212733} & \textbf{0.208097} & \textbf{0.212733} & \textbf{\textit{-0.004636}} & \textbf{\textit{-2.227807}}\\
\hline
\end{tabular}
\caption{The approximated matching distance computed by our algorithm for three decreasing tolerance values ($\epsilon= 9/8, 9/16$ and $9/32$) is shown for a few test cases (unsubdivided, subdivided with $\lphi$ and subdivided with $\aphi$), with 0th- and 1st-order rank invariants.} \label{table:results}
\end{center}
\end{table}

We can see that while the matching distance computed using the axis-wise linear interpolant is, for every tolerance level, equal to the matching distance between the original models, the matching distance computed using the linear interpolant can be quite different, and this both using 0th- and 1st-order persistent homology. However, this phenomenon is only seen when computing the 2D matching distance: the 1D matching distances (the numbers in the first two rows of each table) are always the same. Given that in our context topological aliasing can only be observed when using multi-measuring functions, this follows our expectations.

\subsection{Application to model precision concerns}\label{subsect:application}

When computing the matching distance between the multidimensional persistence diagrams of two models, the computation time can easily become prohibitive if the simplicial complexes representing the models are very large. For this reason, using coarser representations may be necessary. However, doing so comes at a cost in terms of accuracy. Nevertheless, using the stability property (S) of subsection~\ref{sec:stability-property} with the axis-wise interpolation, we can estimate or bound the error caused by coarsening the model, and also calculate the model precision required to reach a given error threshold. This step can be done once for every dataset, using statistical tools to obtain the expected required model precision.

We demonstrate this method using the following two datasets.

\begin{ex}[Circle]\label{ex:circle}
{\em Let $f_i:S^1 \to \R^2$, $i=1,\ldots,10^5$ be a set of random functions on the circle obtained in the following way: if $S^1$ is parametrized by the functions $x=\cos t, y=\sin t$, $t\in [0, 2\pi)$, then
\[\tilde{f}_i(x,y) = \left(\sum_{i=1}^6 \left(\alpha_{i,1}\cos(i\,t) + \beta_{i,1}\sin(i\,t)\right),\sum_{i=1}^6 \left(\alpha_{i,1}\cos(i\,t) + \beta_{i,2}\sin(i\,t)\right)\right),\]
where the $\alpha$'s and $\beta$'s are pseudo-random numbers uniformly distributed in $[-1,1)$. The function $f_i$ is then obtained from $\tilde{f}_i$ by normalizing it so that both its components take $0$ and $1$ as minimum and maximum. It can be plotted as a (not necessarily simple) closed curve in $[0,1]^2$ touching the four sides of this square. For $N=2,3,\ldots,9$, we obtain $\varphi_{i,N}$ by sampling $f_i$ at the $2^{N}$-th roots of unity on $S^1$. On the triangulation $K_N$ where $1$-simplices join successive $2^{N}$-th roots of unity, $\varphi^\urcorner_{i,N}$ can be computed, and property (S) and Theorem~\ref{th:cont-simpl-rank} guarantee that $\|\varphi^\urcorner_{i,N}-f_i\|_\infty$ is an upper bound for the matching distance $\dmm(\rho_{\varphi_{i,N}},\rho_{f_i})$. The following table shows the average and standard deviation of this upper bound over the dataset of $10^5$ functions. Another line shows the sum of mean and standard deviation, which in our tests appears to be an effective upper bound in $82$ to $86\%$ of cases. For normally distributed data, it would be such an effective upper bound in nearly $85\%$ of cases. Figure~\ref{fig:circle} plots this effective error bound in function of the percentage of original simplices kept.
\begin{center}
\begin{tabular}{|c||c|c|c|c|c|c|c|c|}
\hline
N & 2 & 3 & 4 & 5\\
\hline
$\mu$ & 0.748324 & 0.592222 & 0.346717 & 0.193834\\
\hline
$\sigma$ & 0.117896 & 0.104408 & 0.067729 & 0.032303\\
\hline
$\mu+\sigma$ & 0.866220 & 0.696630 & 0.414446 & 0.226138\\
\hline\hline
N & 6 & 7 & 8 & 9\\
\hline
$\mu$ & 0.101172 & 0.051487 & 0.025936 & 0.013012\\
\hline
$\sigma$ & 0.015632 & 0.007552 & 0.003679 & 0.001812\\
\hline
$\mu+\sigma$ & 0.116804 & 0.059039 & 0.0296155 & 0.014824\\
\hline
\end{tabular}
\end{center}

\begin{figure}[ht]
\begin{center}
\psfrag{a}{$0.39$}\psfrag{b}{$0.78$}\psfrag{c}{$1.56$}\psfrag{d}{$3.12$}\psfrag{e}{$6.25$}\psfrag{f}{$12.5$}\psfrag{g}{$25$}\psfrag{h}{$50$}\psfrag{L}[c][c][1][180]{$\mu+\sigma$}
\psfrag{0}{$0$}\psfrag{1}{$0.1$}\psfrag{2}{$0.2$}\psfrag{3}{$0.3$}\psfrag{4}{$0.4$}\psfrag{5}{$0.5$}\psfrag{6}{$0.6$}\psfrag{7}{$0.7$}\psfrag{8}{$0.8$}\psfrag{9}{$0.9$}\psfrag{x}{$1$}
\includegraphics[width=\textwidth]{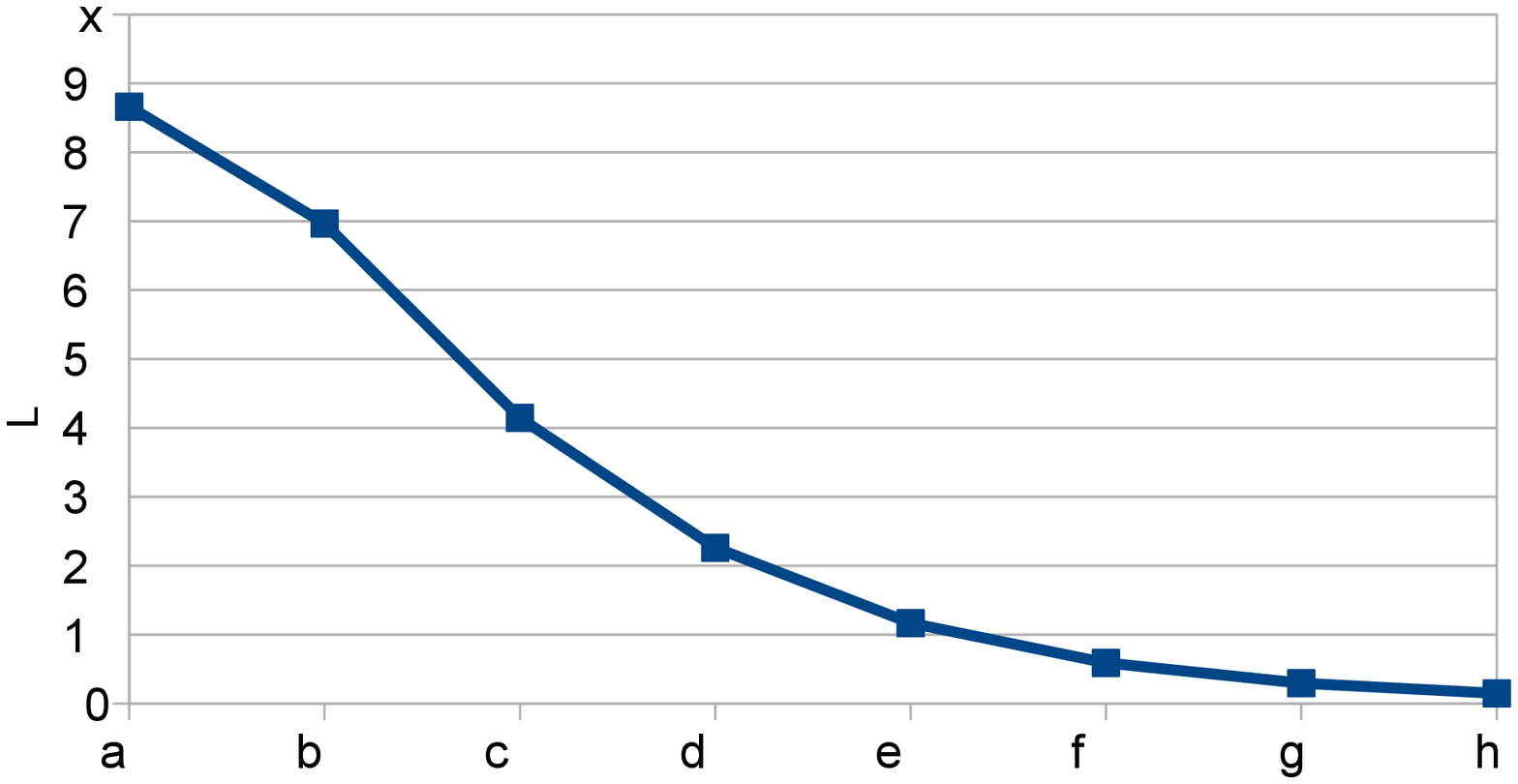}
\end{center}
\caption{Plot of $\mu+\sigma$ in function of the percentage of simplices kept from the original model, Example~\ref{ex:circle}. Horizontal axis is logarithmic.} \label{fig:circle}
\end{figure}}
\end{ex}

\begin{ex}[Torus]\label{ex:torus}
{\em Let $f_i:T \to \R^2$, $i=1,\ldots, 5000$ be a set of random functions on the torus obtained thusly: if a torus is parametrized by the equations $x = (2+1/2\,\cos t)\,\cos u, y = (2+1/2\,\cos t)\,\sin u, z = 1/2\,\sin t$, $t,u\in[0,2\pi)$, then
\[\begin{array}{lll}
\tilde{f}_i(x,y,z) =\\
\Big(\sum_{i=1}^6 \left(\alpha_{i,1}\cos(it) + \beta_{i,1}\sin(it)\right)(2+1/2\,\sum_{j=1}^6 \left(\gamma_{j,1}\cos(ju) + \delta_{j,1}\sin(ju)\right)),\\
\sum_{i=1}^6 (\alpha_{i,2}\cos(it) + \beta_{i,2}\sin(it))(2+1/2\,\sum_{j=1}^6 (\gamma_{j,2}\cos(ju) + \delta_{j,2}\sin(ju)))\Big),
\end{array}\]
where the $\alpha$'s, $\beta$'s, $\gamma$'s and $\delta$'s are pseudo-random numbers uniformly distributed in $[-1,1)$. Here again, $f_i$ is then obtained from $\tilde{f}_i$ by normalizing it so that both its components take $0$ and $1$ as minimum and maximum. Given $N=4,5,\ldots,9$, we build a uniform triangulation of the space $[0,2\pi]^2$ with the identifications $0\sim 2\pi$ in both variables $t$ and $u$, where vertices are the points $(t_i,u_j)$ with $t_i = 2\pi\,i/2^N$ and $u_j = 2\pi\,j/2^{N-2}$, and where triangles have as vertices the points $(t_i,u_j)$, $(t_{(i+1)\mod 2^N},u_j)$ and $(t_{(i+1)\mod 2^N},u_{(j+1)\mod 2^{N-2}})$, or $(t_i,u_j)$, $(t_i,u_{(j+1)\mod 2^{N-2}})$ and $(t_{(i+1)\mod 2^N},u_{(j+1)\mod 2^{N-2}})$, $i=0,\ldots,2^{N}-1$, $j=0,\ldots,2^{N-2}-1$. This space being homeomorphic to $T$, this triangulation corresponds to a triangulation of the torus. Sampling $f_i$ at the vertices of the triangulation, we obtain the function $\varphi_{i,N}$, on which $\varphi^\urcorner_{i,N}$ can be computed. As in the previous example, the following table shows the average and standard deviation, as well as their sum, of $\|\varphi^\urcorner_{i,N}-f_i\|_\infty$ over the dataset of $5000$ functions. Figure~\ref{fig:torus} further plots $\mu+\sigma$ in function of the percentage of original simplices kept.
\begin{center}
{\small\begin{tabular}{|c||c|c|c|c|c|c|c|c|}
\hline
N & 4 & 5 & 6 & 7 & 8 & 9\\
\hline
$\mu$ & 0.384139 & 0.299501 & 0.178587 & 0.097746 & 0.050357 & 0.025411\\
\hline
$\sigma$ & 0.060352 & 0.054141 & 0.033503 & 0.017958 & 0.009268 & 0.004672\\
\hline
$\mu+\sigma$ & 0.444491 & 0.353643 & 0.212090 & 0.115704 & 0.059625 & 0.030083\\
\hline
\end{tabular}}
\end{center}

\begin{figure}[ht]
\begin{center}
\psfrag{a}{$0.024$}\psfrag{b}{$0.098$}\psfrag{c}{$0.39$}\psfrag{d}{$1.56$}\psfrag{e}{$6.25$}\psfrag{f}{$25$}\psfrag{L}[c][c][1][180]{$\mu+\sigma$}
\psfrag{0}{$0$}\psfrag{1}{$0.05$}\psfrag{2}{$0.1$}\psfrag{3}{$0.15$}\psfrag{4}{$0.2$}\psfrag{5}{$0.25$}\psfrag{6}{$0.3$}\psfrag{7}{$0.35$}\psfrag{8}{$0.4$}\psfrag{9}{$0.45$}\psfrag{x}{$0.5$}
\includegraphics[width=\textwidth]{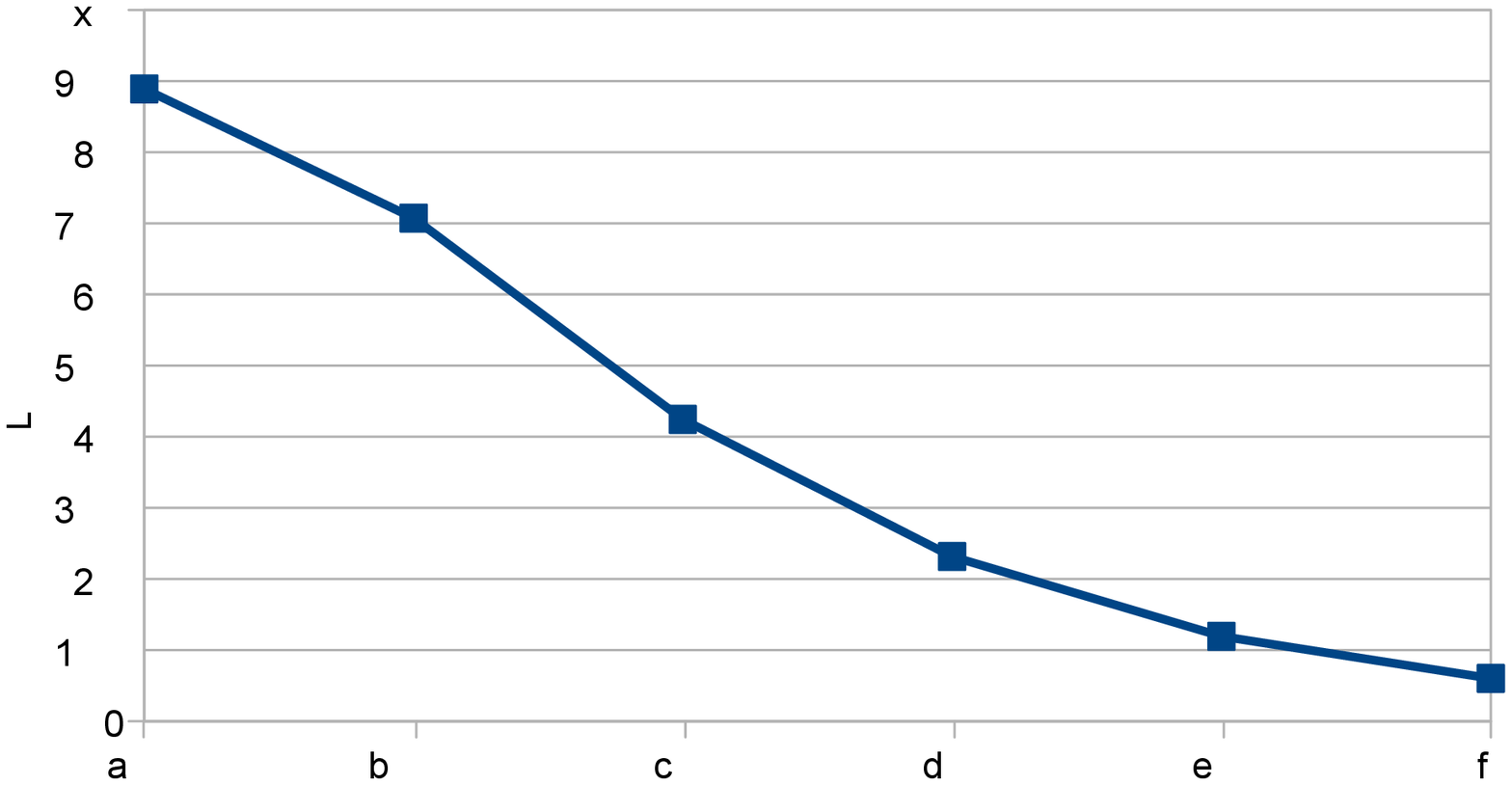}
\end{center}
\caption{Plot of $\mu+\sigma$ in function of the percentage of simplices kept from the original model, Example~\ref{ex:torus}. Horizontal axis is logarithmic.} \label{fig:torus}
\end{figure}}
\end{ex}

 \begin{center}%
 {\bfseries Acknowledgments\vspace{-.5em}}%
 \end{center}%
This work was partially supported by the following institutions: CRM-FQRNT (M.E.), University of Bologna under Marco Polo grant (N.C.), University of Modena and Reggio Emilia under Visiting Professor 2010 grant (T.K.), NSERC Canada Discovery Grant (T.K.), Fields Institute (P.F. and C.L.).  

\bibliographystyle{abbrv}
\bibliography{biblio}

\medskip

\noindent ARCES\\
Universit\`a di Bologna\\
via Toffano 2/2\\
40135 Bologna, Italia\\
\{cavazza,frosini\}@dm.unibo.it
\\~\\
D\'epartement de math\'ematiques\\
Universit\'e de Sherbrooke,\\
Sherbrooke (Qu\'ebec), Canada J1K 2R1\\
\{marc.ethier, t.kaczynski\}@usherbrooke.ca
\\~\\
\noindent Dipartimento di Scienze e Metodi dell'Ingegneria\\
Universit\`a di Modena e Reggio Emilia\\
via Amendola 2, Pad. Morselli\\
42100 Reggio Emilia
\\
clandi@unimore.it

\end{document}